\documentclass[11pt]{article}

\usepackage[letterpaper,margin=1.05in]{geometry}
\usepackage{amsmath,amssymb,amsthm,mathtools}
\usepackage{microtype}
\usepackage{xcolor}
\usepackage{enumitem}
\usepackage[numbers,sort&compress]{natbib}
\usepackage[colorlinks=true,linkcolor=blue,citecolor=blue,urlcolor=blue]{hyperref}
\hypersetup{
  pdftitle={Sharp Tangential NEC Minimization and Israel Surface Layers in Schwarzschild Mass Interpolations},
  pdfauthor={Changsun Choi and Ryan E. Grady}
}
\usepackage{fancyhdr}
\usepackage{tikz}
\usetikzlibrary{arrows.meta,positioning,patterns}

\allowdisplaybreaks

\newtheorem{theorem}{Theorem}[section]
\newtheorem{proposition}[theorem]{Proposition}
\newtheorem{lemma}[theorem]{Lemma}
\newtheorem{corollary}[theorem]{Corollary}
\theoremstyle{definition}
\newtheorem{definition}[theorem]{Definition}

\newtheorem{remark}[theorem]{Remark}

\newcommand{\R}{\mathbb R}
\newcommand{\Ainf}{\mathcal A_{\infty}}
\newcommand{\ABV}{\mathcal A_{BV}}
\newcommand{\I}{\mathcal I}
\newcommand{\JBV}{\mathcal J}
\newcommand{\supp}{\operatorname{supp}}
\newcommand{\diag}{\operatorname{diag}}
\newcommand{\weakstar}{\xrightarrow{\,\mathrm{w}^*\,}}
\newcommand{\one}{\mathbf 1}

\title{\textbf{{Sharp Tangential NEC Minimization and Israel Surface Layers in Schwarzschild Mass Interpolations}}}
\author{
Changsun Choi\thanks{
Department of Physics, Montana State University,
         Bozeman 59717 USA.
Email: \texttt{changsunchoi@montana.edu}
}
\and
Ryan E.~Grady\thanks{
Department of Mathematical Sciences, Montana State University,
         Bozeman 59717 USA.
Email: \texttt{ryan.grady1@montana.edu}
}
}
\date{}

\begin{document}
\maketitle

\begin{abstract}
We study static, spherically symmetric metrics in Schwarzschild gauge whose mass function increases smoothly from $M_1$ to $M_2>M_1$ across an annulus $[R_1,R_2]$ outside the larger Schwarzschild radius.  {An elementary obstruction shows that tangential NEC violation is unavoidable in this class. This reduces the physical question to a quantitative one: what is the least possible violation, and what geometry is selected by near-minimization?}  We first determine the exact $L^1$-relaxation of the resulting nonlocal weighted positive-variation functional and solve the relaxed problem explicitly.  Its unique minimizer is a normalized box profile, {which yields the infimum of violations: the relaxed minimum is attained, while the smooth infimum is not.}  An exact deficit decomposition yields sharp quantitative stability, with a square-root rate at a nondegenerate critical optimizer and a linear rate at a strict constrained boundary optimizer; the same rates control the mass function and metric coefficients.  We identify a variational phase transition between these regimes and prove reciprocal-width divergence in the thin-annulus limit.  {Every smooth minimizing sequence converges to one locally Lipschitz metric with a
constant-density $p=-\rho$ bulk and two timelike surface layers, its Einstein
tensors converge distributionally, and the singular terms agree with the Israel
surface stress tensors.} The negative tangential null energy concentrates on the inner layer, whose integrated negative pressure equals the sharp variational cost.
\end{abstract}

\noindent\textbf{Keywords.} Null energy condition; no--go theorem; gravastar; relaxation; functions of bounded variation; quantitative stability; Schwarzschild mass function; distributional curvature; Israel junction conditions.

\noindent\textbf{Mathematics Subject Classification.} 83C57, 49J45, 49Q20, 83C75.

\section{Introduction}

Energy conditions convert algebraic restrictions on the stress-energy tensor into geometric and causal information.  When a prescribed spacetime construction cannot avoid violating an energy condition, a natural next question is quantitative: what is the least violation compatible with the construction, and what geometry is approached when that lower bound is nearly attained?  Volume-integral measures of energy-condition violation have been considered in the wormhole literature, for example in \cite{VisserKarDadhich2003,KarDadhichVisser2004}.  The present paper studies a different exactly solvable problem: a monotone increase of Schwarzschild mass across a fixed annulus in Schwarzschild gauge.

{The analytic framework of $BV$ relaxation is standard \cite{AmbrosioFuscoPallara2000,EvansGariepy2015}; hypersurface-supported curvature and the Israel junction formalism are classical \cite{Israel1966,GerochTraschen1987,MarsSenovilla1993,MansouriKhorrami1996,LeFlochMardare2007}.  Gravastar models likewise motivate configurations with a $p=-\rho$ region and thin surface layers \cite{MazurMottola2004,VisserWiltshire2004}.  The main theorem below states the paper's contribution: an exact solution and relaxation of the present nonlocal variational problem, sharp rigidity of its near-minimizers, and sequence-independent selection of
a bulk--two-layer Israel geometry of gravastar type in the sense made precise in
Remark~\ref{rem:gravastar}. Here “gravastar-type” refers only to the presence of a $p=-\rho$ region bounded by surface layers. Unlike the standard gravastar construction, which typically joins a de Sitter interior to a Schwarzschild exterior through a single transition shell, our limiting geometry contains a constant-density annulus between two Schwarzschild vacuum regions and has two surface layers selected by the variational problem.
}

Fix
\[
0<M_1<M_2,
\qquad
\Delta M:=M_2-M_1,
\qquad
R_2>R_1>2M_2.
\]
We assume $R_2>R_1>2M_2$ because this makes the induced metric static, as we will see. For a nonnegative smooth profile $u$ supported in $[R_1,R_2]$ and normalized by
\[
\int_{R_1}^{R_2}r^2u(r)\,dr=1,
\]
define
\[
P_u(r):=\int_{R_1}^r s^2u(s)\,ds,
\qquad
m_u(r):=M_1+\Delta M\,P_u(r),
\]
and
\[
F_u(r):=1-\frac{2m_u(r)}r.
\]
The corresponding metric, in geometric units $G=c=1$, is
\[
g_u=-F_u(r)\,dt^2+F_u(r)^{-1}\,dr^2+r^2d\Omega^2.
\]
Within this ansatz one has $p_r=-\rho$, while
\[
\rho+p_t=-\frac{\Delta M}{8\pi}ru'.
\]
{Because every admissible profile starts and ends at zero but has positive weighted integral, it must rise on a set of positive measure.  Hence $\rho+p_t<0$ somewhere for every smooth interpolation.  Proposition~\ref{thm:nogo} records this no--go statement and a profile-independent positive lower bound; it is the motivation for replacing the qualitative question by a sharp minimization problem.}
Consequently, after removing the fixed physical factor $\Delta M/2$, the proper-volume-integrated negative tangential NEC is
\[
\I[u]
=
\int_{R_1}^{R_2}
\frac{r^3(u'(r))_+}{\sqrt{F_u(r)}}\,dr.
\]
The problem is genuinely nonlocal because the weight $F_u^{-1/2}$ depends on the cumulative profile $P_u$.

The first main contribution is analytical.  We prove that the canonical $L^1$-relaxation of $\I$ to the natural $BV$ class is
\[
\JBV[u]
=
\int_{[R_1,R_2]}
\frac{r^3}{\sqrt{F_u(r)}}\,d(Du)^+(r).
\]  Notice that $\I[u]$ contains a Riemann integral, whereas $\JBV[u]$ features an integral with respect to a measure.
{The relaxation is the canonical lower-semicontinuous completion of the smooth
problem: for an idealized BV profile $u$, $\mathcal{J}[u]$ is the least
asymptotic tangential-NEC cost among smooth matter layers converging to $u$ in 
$L^1$, including energy concentrated in thin transition zones.}  The relaxed functional has a unique minimizer, a normalized box profile.

The second contribution is rigidity.  An exact decomposition of the energy deficit separates three losses: positive variation after mass has accumulated, positive variation away from the optimal radius, and negative variation before the outer endpoint.  This decomposition gives convergence of every smooth minimizing sequence and sharp stability estimates. The stability theory has two regimes: a square-root rate when the optimizer is the freely selected interior point $a=x_0$, and a linear rate when the optimizer is pinned to the boundary $a=R_1>x_0$. The interior regime is possible only in the restricted low-mass-contrast range
\[
M_2<\frac{7M_1}{6},
\qquad\text{equivalently}\qquad
\frac{\Delta M}{M_1}<\frac16,
\]
because $R_1>2M_2$ and $x_0<7M_1/3$. Outside this range, only the boundary phase and the linear stability regime can occur.
 Almost minimal NEC cost therefore forces the matter profile, mass function, and metric coefficients to be close to one specific geometry.

The third contribution is geometric and physical.  The Einstein tensors of every smooth minimizing sequence converge as tensor distributions.  The limit consists of a constant-density bulk region and two timelike surface layers.  Their Dirac coefficients agree exactly with the Israel tensors, and the negative part is concentrated on the inner layer.  The sharp variational value is precisely the integrated negative pressure of this layer.

To state the results, set
\[
A(r):=\frac{R_2^3-r^3}{3},
\qquad
q(r):=\frac{r^{7/2}}{\sqrt{r-2M_1}},
\]
and
\[
\Psi(r):=\frac{q(r)}{A(r)}
=
\frac{3r^{7/2}}{\sqrt{r-2M_1}(R_2^3-r^3)}.
\]
There is a unique $x_0\in(2M_1,7M_1/3)$ satisfying
\[
M_1x_0^3+3R_2^3x_0-7M_1R_2^3=0.
\]
Define
\[
a:=\max\{R_1,x_0\},
\qquad
\lambda_*:=\Psi(a),
\qquad
H_*:=\frac1{A(a)}=\frac{3}{R_2^3-a^3},
\]
and
\[
u_*(r):=H_*\one_{(a,R_2)}(r).
\]
{See Figure~\ref{fig:selected-limit} for the graph of  $u_*$. The interior phase requires $R_1\leq x_0$, while the standing assumption $R_1>2M_2$ and the estimate $x_0<7M_1/3$ imply
\[
2M_2<R_1\leq x_0<\frac{7M_1}{3}.
\]
 Outside this parameter range, the optimizer is necessarily pinned to the inner boundary, and only the linear stability regime occurs.}

\begin{theorem}[Overview of the main results]\label{thm:main}
Under the assumptions above, the following statements hold.

\begin{enumerate}[label=\textup{(\roman*)}]

\item The exact $L^1$-relaxation of the smooth functional is $\JBV$.  Its minimum and the smooth infimum agree:
\[
\min_{u\in\ABV}\JBV[u]
=
\inf_{u\in\Ainf}\I[u]
=
\lambda_*.
\]
The unique relaxed minimizer is $u_*$, and no smooth profile attains the infimum.

\item If $u_n\in\Ainf$ and $\I[u_n]\longrightarrow\lambda_*$, then
\[
u_n\longrightarrow u_*
\quad\text{in }L^1(\R),
\]
\[
(u_n')_+\,dr\weakstar H_*\delta_a,
\qquad
(u_n')_-\,dr\weakstar H_*\delta_{R_2},
\]
and the convergence is strict in $BV$.

\item Let $u\in\ABV$, define $\delta[u]:=\mathcal J[u]-\lambda_\ast$, and assume $\delta[u]\leq1$.  If $a=x_0$, then
\[
\|u-u_*\|_{L^1}
+
\|F_u-F_*\|_{L^\infty}
+
\|F_u^{-1}-F_*^{-1}\|_{L^\infty}
\leq C\sqrt{\delta[u]}.
\]
If $a=R_1>x_0$, the right-hand side improves to $C\delta[u]$.  Both exponents are optimal.

\item As the inner endpoint $R_1$ varies, the sharp value $R_1\mapsto
\inf_{u\in\Ainf(R_1,R_2)}\I[u]$ undergoes a variational phase transition at $R_1=x_0$: the optimal rising radius is pinned at the interior point $x_0$ for $R_1\leq x_0$ and at the boundary $R_1$ for $R_1>x_0$.  If $R_2-R_1=L\downarrow0$ with $R_1$ fixed, then
\[
\lambda_*
=
\frac{R_1^{3/2}}{\sqrt{R_1-2M_1}}\frac1L+O(1).
\]
The interior phase can occur only if $x_0>2M_2$. Since $x_0<7M_1/3$, this requires
\[
\frac{\Delta M}{M_1}<\frac16.
\]
If this condition fails, the optimizer is always pinned to the inner boundary and only the linear stability regime occurs.

\item For every smooth minimizing sequence, the associated metrics converge uniformly to a unique locally Lipschitz metric $g_*$ and
\[
G[g_{u_n}]\longrightarrow G[g_*]
\]
as tensor distributions.  Writing $\beta:=\Delta M H_*$, the nonzero mixed components are
\[
G^t{}_t[g_*]=G^r{}_r[g_*]
=-2\beta\one_{(a,R_2)},
\]
\[
G^\theta{}_\theta[g_*]=G^\phi{}_\phi[g_*]
=-2\beta\one_{(a,R_2)}-\beta a\delta_a+\beta R_2\delta_{R_2}.
\]
No $\delta'$ term occurs.

\item The two Dirac terms are exactly the Israel surface tensors at $r=a$ and $r=R_2$.  Both layers have zero surface density, and their tangential pressures are
\[
p_a=-\frac{\beta a}{8\pi\sqrt{1-2M_1/a}},
\qquad
p_{R_2}=\frac{\beta R_2}{8\pi\sqrt{1-2M_2/R_2}}.
\]
The negative tangential NEC measures converge to a Dirac measure at the inner layer, and
\[
4\pi a^2(-p_a)=\frac{\Delta M}{2}\lambda_*.
\]
\end{enumerate}
\end{theorem}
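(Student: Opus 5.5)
The plan is to reduce everything to a one-dimensional inequality in the cumulative variable. The key step is a pointwise lower bound that makes $u_*$ visibly optimal. For $u\in\ABV$, write $\mu^\pm=(Du)^\pm$. Since $u$ vanishes outside $[R_1,R_2]$, we have $u(r)=\mu^+([R_1,r])-\mu^-([R_1,r])\le \mu^+([R_1,r])$. Then
\[
1=\int r^2u\,dr=\int A(s)\,d(Du)(s)\le\int A\,d\mu^+,
\]
by Fubini with $A(s)=\int_s^{R_2}r^2dr$. On the other side, $F_u\le 1-2M_1/r$, since $m_u\ge M_1$; hence $r^3F_u^{-1/2}\ge q(r)$. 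Combining,
\[
\JBV[u]\ge\int q\,d\mu^+\ge \Big(\min_{[R_1,R_2]}\Psi\Big)\int A\,d\mu^+\ge\min\Psi .
\]
A calculus computation shows that $\Psi'$ has the sign of $M_1x^3+3R_2^3x-7M_1R_2^3$. Since $\Psi\to\infty$ at $R_2$, the minimum of $\Psi$ on $[R_1,R_2)$ is $\Psi(a)=\lambda_*$. For $u_*$, we have $m_u=M_1$ on $[R_1,a]$, so every inequality above is an equality. Equality in general forces three things: $\mu^+$ is concentrated where $m_u=M_1$, $\mu^+$ is concentrated on $\operatorname{argmin}\Psi=\{a\}$, and $\mu^-$ is concentrated at $R_2$. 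These force $u=u_*$, which gives uniqueness.

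For the relaxation in (i), I would prove lower semicontinuity of $\JBV$ under $L^1$ convergence. The argument: $P_{u_n}\to P_u$ uniformly, so the weights converge uniformly, and the positive variation is lower semicontinuous against continuous nonnegative weights. The one subtlety is that the weight is evaluated at the jump points of $u$; I would handle it by checking that $F_u$ is continuous (indeed Lipschitz), so jumps cause no ambiguity. For the recovery sequence, I would mollify slightly inward, so that the rise of $u_*$ occurs where the mass has not yet accumulated and the error is $o(1)$. Non-attainment for smooth profiles follows because a smooth $\mu^+$ cannot be a Dirac mass. For (ii), I would tighten the chain into an exact deficit identity
\[
\delta=\int\big(r^3F_u^{-1/2}-q\big)\,d\mu^+ +\int(q-\lambda_*A)\,d\mu^+ +\lambda_*\int A\,d\mu^-_{\text{early}},
\]
which contains the three losses named in the introduction. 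Each term tends to $0$ along a minimizing sequence, and compactness in $BV$ (the total variation is bounded because $q\gtrsim$ const) then gives $\mu_n^+\to H_*\delta_a$ and $\mu_n^-\to H_*\delta_{R_2}$. Strict convergence follows from convergence of the masses.

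For (iii), the middle term controls how far $\mu^+$ spreads from $a$. At an interior nondegenerate minimum, $q-\lambda_*A\approx c(r-a)^2$. Hence the first moment $\int|r-a|\,d\mu^+$ is at most $C\sqrt\delta$ by Cauchy–Schwarz, and this controls $\|u-u_*\|_{L^1}$ through the Wasserstein-type identity $\|u-u_*\|_{L^1}\sim\int|P_{\mu}-P_{\mu_*}|$. At a boundary minimum $\Psi'(a)>0$, so $q-\lambda_*A\ge c(r-a)$ and the rate is linear. The bounds on $F_u$ and $F_u^{-1}$ follow from $|m_u-m_*|\le\Delta M\|P_u-P_*\|_\infty$, where $P$ is controlled by $\|u-u_*\|_{L^1}$ with weight $r^2$, together with $F_*\ge c>0$. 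To show the exponents are optimal, I would test profiles that spread the rising mass over $[a,a+\varepsilon]$, which produce deficits of order $\varepsilon^2$ or $\varepsilon$ respectively. Part (iv) is then read off from the formula for $\lambda_*$: expanding $\Psi(R_1)$ with $A(R_1)\approx R_1^2L$ gives the thin-annulus asymptotics. I expect the delicate point in all of this to be the first deficit term in the stability estimate, because it mixes the weight with the profile; I would simply discard it, since it is nonnegative.

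For (v)–(vi), I would use that the mass functions converge uniformly to $m_*$, which is Lipschitz and piecewise linear in $r^3$, so $F_n\to F_*$ uniformly. The Einstein tensor components in this gauge are $G^t{}_t=G^r{}_r=-2m'/r^2$ and $G^\theta{}_\theta=-m''/r$. These are linear in $m$ and its derivatives, so distributional convergence follows from $m_n\to m_*$ in $W^{1,1}$ and the convergence of $m_n''$ in the sense of distributions. Here
\[
m''=\Delta M(r^2u)'=\Delta M\big(2ru+r^2u'\big),
\]
and $u'$ converges to $H_*(\delta_a-\delta_{R_2})$, which produces the stated Dirac terms; no $\delta'$ appears. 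For the Israel identification, I would compute the jump in extrinsic curvature across $r=a$ and $r=R_2$, using $[\sqrt F]$ and $[F'/(2\sqrt F)]$. Since $F$ is continuous, the angular part of $[K]$ vanishes, so the surface density is zero. The pressure comes from $[F']/(2\sqrt F)$ with $[F']=-2[m']/r$, which reproduces $p_a$ and $p_{R_2}$. Finally, $4\pi a^2(-p_a)$ simplifies to $\tfrac{\Delta M}{2}H_*q(a)=\tfrac{\Delta M}{2}\lambda_*$.
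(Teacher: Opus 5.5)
Your outline follows the paper almost step for step: the normalization identity, $W_u\ge q\ge\lambda_*A$, the three-term deficit identity, coercivity of $g=q-\lambda_*A$, the transport identity, linearity of $G$ in $m$, and the static-shell Israel computation. The step that fails is your proof that the square-root exponent is optimal. When $a=x_0$, profiles that spread the rise over $[a,a+\varepsilon]$ do not have deficit of order $\varepsilon^2$. The first deficit term, which you propose to discard, is of order $\varepsilon$ for them. With $F_0=1-2M_1/r$,
\[
W_u-q=\frac{2\Delta M\,r^2P_u}{\sqrt{F_uF_0}\,\bigl(\sqrt{F_0}+\sqrt{F_u}\bigr)}\ \ge\ \Delta M R_1^2\,P_u .
\]
Suppose $u=0$ on $[R_1,a]$ and $u$ rises monotonically from $0$ to $H$ on $I=[a,a+\varepsilon]$. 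Integration by parts gives $\int_I P_u\,d(Du)^+=\int_I r^2u\,(H-u)\,dr$, which is about $a^2H_*^2\varepsilon/6$ for a linear ramp. Hence $\delta[u]\gtrsim\varepsilon\asymp\|u-u_*\|_{L^1}$. Your family exhibits only a linear relation and cannot show that $\sqrt{\delta}$ is sharp. In the strict boundary regime the same family does give $\delta\asymp\varepsilon\asymp\|u-u_*\|_{L^1}$, so that half of your claim survives. Dropping the nonlocal term is legitimate for the upper bound in (iii), but not for a lower-bound construction.

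The missing idea is to keep the rise a single jump and translate it, so that it still occurs where no mass has accumulated. The paper takes $u_x=A(x)^{-1}\one_{(x,R_2)}$ with $x\to a$, choosing $x>x_0$ if $R_1=x_0$. Then $P_{u_x}(x)=0$ and the first deficit term vanishes identically. Since $\Psi'(x_0)=0<\Psi''(x_0)$, this gives $\delta[u_x]=\Psi(x)-\Psi(a)\asymp|x-a|^2$, while $\|u_x-u_*\|_{L^1}\asymp|x-a|$. Recovery sequences carry this to smooth profiles.

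Two smaller omissions should also be filled.
\begin{enumerate}
\item In (iii), your transport identity compares $(Du)^+$ only with $V\delta_a$, where $V=(Du)^+(\R)$, and it ignores $(Du)^-$. You also need two bounds.
\begin{enumerate}
\item The third deficit term, via $A(r)\ge R_1^2(R_2-r)$, gives $\int(R_2-r)\,d(Du)^-\le\delta/(\lambda_*R_1^2)$.
\item The normalization identity, in the form $A(a)V-1=\int(A(a)-A)\,d(Du)^++\int A\,d(Du)^-$, bounds $|V-H_*|$.
\end{enumerate}
\item In (i), the claim $\I_{\mathrm{rel}}=\JBV$ requires a recovery sequence for every $u\in\ABV$, not just for $u_*$. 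For general $u$:
\begin{enumerate}
\item compress the profile into $(R_1,R_2)$;
\item mollify with a nonnegative kernel $\eta$, using $(D(v*\eta))^+\le(Dv)^+*\eta$;
\item renormalize;
\item replace $W_u$ by $W_{u_n}$ using the same uniform convergence of weights you already use for lower semicontinuity.
\end{enumerate}
\end{enumerate}
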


\begin{figure}[t]
\centering
\begin{tikzpicture}[x=1cm,y=1cm,>=Latex,font=\small]
  \draw[->] (0,0) -- (12.2,0) node[right] {$r$};
  \draw[thick] (1,0.15) rectangle (4,1.25);
  \draw[thick,pattern=north east lines,pattern color=black] (4,0.15) rectangle (8.5,1.25);
  \draw[thick] (8.5,0.15) rectangle (11.5,1.25);
  \draw[very thick] (4,0.02) -- (4,1.45);
  \draw[very thick] (8.5,0.02) -- (8.5,1.45);
  \node[align=center,font=\scriptsize] at (2.5,0.7) {vacuum\\Schwarzschild\\$M_1$};
  \node[fill=white,inner sep=1.5pt,align=center,font=\scriptsize] at (6.25,0.7) {$\rho=\beta/(4\pi)$\\$p=-\rho$};
  \node[align=center,font=\scriptsize] at (10,0.7) {vacuum\\Schwarzschild\\$M_2$};
  \node[below] at (4,0) {$a$};
  \node[below] at (8.5,0) {$R_2$};
  \node[above,align=center] at (4,1.45) {inner layer\\$p_a<0$};
  \node[above,align=center] at (8.5,1.45) {outer layer\\$p_{R_2}>0$};
  \draw[->] (0,-1.4) -- (12.2,-1.4) node[right] {$r$};
  \draw[thick] (1,-1.4) -- (4,-1.4) -- (4,-0.55) -- (8.5,-0.55) -- (8.5,-1.4) -- (11.5,-1.4);
  \node[left] at (1,-0.55) {$u_*$};
  \node[right] at (8.6,-0.55) {$H_*$};
\end{tikzpicture}
\caption{{The variationally selected limit.  A normalized box profile produces a constant-density $p=-\rho$ annulus between two vacuum Schwarzschild regions and two timelike surface layers.  All negative tangential NEC concentrates on the inner layer.}}
\label{fig:selected-limit}
\end{figure}

The paper is organized as follows.  Section~\ref{sec:geometry} derives the NEC functional {and an obstruction for the tangential NEC.}  Section~\ref{sec:sharp} proves the exact relaxation theorem and solves the sharp problem.  Section~\ref{sec:stability} establishes the deficit decomposition, sharp stability, and convergence of all minimizing sequences.  Section~\ref{sec:phase} describes the phase transition and thin-annulus asymptotics.  Section~\ref{sec:einstein} proves distributional Einstein-tensor convergence.  Section~\ref{sec:israel} identifies the Israel layers and the sharp physical cost.  The final section records the scope and natural extensions of the model.

\section{Geometric set-up and the NEC functional}\label{sec:geometry}

\begin{definition}[Admissible smooth profiles]
Let $\Ainf$ be the set of functions $u\in C^\infty(\R)$ satisfying {
\[u\geq0, \qquad \supp u\subset(R_1,R_2),
\qquad \int_{R_1}^{R_2}r^2u(r)\,dr=1.\]}
For $u\in\Ainf$, extend $P_u$ by $0$ on $(-\infty,R_1]$ and by $1$ on $[R_2,\infty)$.
\end{definition}

The condition $R_1>2M_2$ gives the uniform bound
\[
F_u(r)
\geq
\kappa:=1-\frac{2M_2}{R_1}>0
\]
for every admissible $u$ and $r\in[R_1,R_2]$.

\begin{proposition}[Einstein tensor and matter variables]\label{prop:einstein-smooth}
Let
\[
g=-F(r)\,dt^2+F(r)^{-1}\,dr^2+r^2d\Omega^2,
\qquad
F(r)=1-\frac{2m(r)}r,
\]
with $m$ smooth.  Then
\[
G^t{}_t=G^r{}_r=-\frac{2m'}{r^2},
\qquad
G^\theta{}_\theta=G^\phi{}_\phi=-\frac{m''}{r}.
\]
If $G^\mu{}_\nu=8\pi\diag(-\rho,p_r,p_t,p_t)$, then
\[
\rho=\frac{m'}{4\pi r^2},
\qquad
p_r=-\rho,
\qquad
p_t=-\frac{m''}{8\pi r}.
\]
For $m=m_u$ one has
\[
\rho=\frac{\Delta M}{4\pi}u,
\qquad
\rho+p_r=0,
\qquad
\rho+p_t=-\frac{\Delta M}{8\pi}ru'.
\]
\end{proposition}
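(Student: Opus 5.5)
The plan is a direct curvature computation in an orthonormal-type frame, or equivalently from the standard formulas for a static spherically symmetric metric. We write $g=-e^{2\Phi}dt^2+e^{2\Lambda}dr^2+r^2d\Omega^2$ with $e^{2\Phi}=F$ and $e^{-2\Lambda}=F$. For such metrics the mixed Einstein components are classical:
\[
G^t{}_t=-\frac{1}{r^2}\bigl(1-(rF)'\bigr),\qquad
G^r{}_r=-\frac{1}{r^2}\bigl(1-F-rF'\bigr),
\]
\[
G^\theta{}_\theta=G^\phi{}_\phi=\frac{F''}{2}+\frac{F'}{r}.
\]
These hold precisely because $g_{tt}g_{rr}=-1$. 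We would either quote them or derive them quickly from the Christoffel symbols $\Gamma^t_{tr}=F'/(2F)$, $\Gamma^r_{tt}=FF'/2$, $\Gamma^r_{rr}=-F'/(2F)$, $\Gamma^r_{\theta\theta}=-rF$, $\Gamma^\theta_{r\theta}=1/r$, and so on.

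Next we substitute $F=1-2m/r$. Then $rF=r-2m$, so $(rF)'=1-2m'$, which gives $G^t{}_t=-2m'/r^2$. Since $1-F-rF'=(rF)'-\ldots$, the same computation gives $G^r{}_r=-2m'/r^2$. For the angular component we use $F'=2m/r^2-2m'/r$ and
\[
F''=-\frac{4m}{r^3}+\frac{4m'}{r^2}-\frac{2m''}{r}.
\]
The combination $F''/2+F'/r$ then collapses to $-m''/r$. This cancellation is the only place where care is needed; it is the step most prone to sign slips, so we would double-check it, for instance against Schwarzschild ($m$ constant gives zero) and de Sitter ($m\propto r^3$ gives $G^\theta{}_\theta=G^t{}_t$).

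Finally we read off the matter variables from $G^\mu{}_\nu=8\pi\,\diag(-\rho,p_r,p_t,p_t)$. This gives $\rho=m'/(4\pi r^2)$, $p_r=-\rho$, and $p_t=-m''/(8\pi r)$. For $m=m_u$ we have $m_u'=\Delta M\,r^2u$, since $P_u'=r^2u$. Hence $\rho=\Delta M\,u/(4\pi)$ and $m_u''=\Delta M(2ru+r^2u')$, so
\[
\rho+p_t=\frac{\Delta M}{4\pi}u-\frac{\Delta M}{8\pi}(2u+ru')=-\frac{\Delta M}{8\pi}ru'.
\]
Smoothness of $m_u$ follows from $u\in C^\infty$, so all of these formulas are classical pointwise identities.
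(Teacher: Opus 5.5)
Your proposal is correct and follows the paper's own route. You write the mixed Einstein components in terms of $F$ using $g_{tt}g_{rr}=-1$, substitute $F=1-2m/r$, and then use $m_u'=\Delta M\,r^2u$ and $m_u''=\Delta M(2ru+r^2u')$. The only blemish is the garbled identity in the $G^r{}_r$ step: it should read $1-F-rF'=1-(rF)'$, so $G^r{}_r$ coincides with $G^t{}_t$ exactly, matching the paper's $(rF'+F-1)/r^2$.
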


\begin{proof}
For a smooth $F$, direct computation gives
\[
G^t{}_t=G^r{}_r=\frac{rF'+F-1}{r^2},
\qquad
G^\theta{}_\theta=G^\phi{}_\phi=\frac12F''+\frac{F'}r.
\]
Substituting $F=1-2m/r$ yields the first displayed formulas.  Since
\[
m_u'=\Delta M r^2u,
\qquad
m_u''=\Delta M(2ru+r^2u'),
\]
the remaining identities follow.
\end{proof}

{The following elementary obstruction motivates the variational problem; related
analyses of energy conditions in smooth spherical transition geometries appear in
\cite{ConboyLake2005,Zaslavskii2010}. The lower bound will be sharpened in
Theorem~\ref{thm:sharp}.
For a real number $x$, we write
\[
x_+:=\max\{x,0\},
\qquad
x_-:=\max\{-x,0\}.
\]
Thus $(u')_+$ denotes the pointwise positive part of the ordinary derivative
of a smooth function.
\begin{proposition}[Elementary NEC obstruction]\label{thm:nogo}
{For every $u\in\Ainf$, the inequality $\rho+p_t\geq0$ fails on a set of positive Lebesgue measure in $(R_1,R_2)$.  Moreover,}
\[
{\I[u]\geq \frac{R_1^3}{R_2^2(R_2-R_1)}>0.}
\]
{Thus, for fixed $\Delta M$, no admissible smooth Schwarzschild-mass interpolation in this gauge can make the total tangential NEC defect arbitrarily small.}
\end{proposition}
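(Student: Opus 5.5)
The plan is to use Proposition~\ref{prop:einstein-smooth} to turn both claims into statements about $u'$, and then bound the positive variation of $u$ from below using only the normalization.

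\emph{Qualitative claim.} By Proposition~\ref{prop:einstein-smooth}, $\rho+p_t=-\frac{\Delta M}{8\pi}ru'$ with $\Delta M>0$ and $r>0$. So $\rho+p_t<0$ exactly where $u'>0$. Suppose $u'\le 0$ almost everywhere on $(R_1,R_2)$. Since $u$ is smooth, $u'\le0$ everywhere there, so $u$ is nonincreasing. But $\supp u\subset(R_1,R_2)$ forces $u=0$ near $R_1$, hence $u\equiv0$. This contradicts $\int r^2u\,dr=1$. Moreover $\{u'>0\}$ is open by continuity and nonempty, so it has positive Lebesgue measure.

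\emph{Quantitative bound.} I will bound the weight and the variation separately.

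First, the weight. We have $0<F_u\le 1$, because $m_u\ge M_1>0$. Hence $F_u^{-1/2}\ge1$, and also $r^3\ge R_1^3$ on $[R_1,R_2]$. Therefore
\[
\I[u]\ge R_1^3\int_{R_1}^{R_2}(u')_+\,dr .
\]

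Second, the positive variation. Let $\|u\|_\infty=u(r_0)$. Since $u(R_1)=0$,
\[
u(r_0)=\int_{R_1}^{r_0}u'\,dr\le\int_{R_1}^{R_2}(u')_+\,dr .
\]

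Third, the normalization. It gives
\[
1=\int_{R_1}^{R_2}r^2u\,dr\le R_2^2(R_2-R_1)\|u\|_\infty ,
\]
so $\|u\|_\infty\ge \bigl(R_2^2(R_2-R_1)\bigr)^{-1}$.

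Combining the three steps yields $\I[u]\ge R_1^3/\bigl(R_2^2(R_2-R_1)\bigr)$, which is the stated bound. This bound does not depend on $u$, so the total tangential NEC defect $\frac{\Delta M}{2}\I[u]$ cannot be made arbitrarily small.

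I see no real obstacle here. The only point needing care is the direction of the weight inequality $F_u^{-1/2}\ge1$, which uses $F_u\le1$ and not the lower bound $\kappa$. Using $\kappa$ would only improve the constant, by a factor $\kappa^{-1/2}$.
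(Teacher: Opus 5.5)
Your proof is correct and follows the paper's argument essentially step for step: the weight bound from $F_u\le 1$ and $r\ge R_1$, the positive variation bounded below by $\max u$, and the normalization giving $\max u\ge 1/\bigl(R_2^2(R_2-R_1)\bigr)$. Your qualitative argument (monotonicity by contradiction) is only a cosmetic variant of the paper's ``rise from $0$ to $u_{\max}$''. One correction to your closing aside: $F_u\ge\kappa$ gives $F_u^{-1/2}\le\kappa^{-1/2}$, an upper bound on the weight, so it cannot improve this lower bound; an improvement would instead come from $F_u\le 1-2M_1/r$, i.e.\ $W_u\ge q$.
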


\begin{proof}
{Proposition~\ref{prop:einstein-smooth} gives $\rho+p_t=-(\Delta M/8\pi)ru'$.  Since $u$ is nonnegative, vanishes near both endpoints, and has positive normalization, it is not identically zero.  If $u_{\max}:=\max u$, then $u_{\max}>0$, and the rise from $0$ to $u_{\max}$ forces $u'>0$ on a set of positive measure.  Hence $\rho+p_t<0$ there.}

{For the lower bound, the total upward variation satisfies
\[
\int_{R_1}^{R_2}(u')_+\,dr\geq u_{\max}.
\]
Since $r\geq R_1$ and $F_u\leq1$,
\[
\I[u]\geq R_1^3\int_{R_1}^{R_2}(u')_+\,dr\geq R_1^3u_{\max}.
\]
The normalization gives
\[
1=\int_{R_1}^{R_2}r^2u(r)\,dr
\leq R_2^2(R_2-R_1)u_{\max}.
\]
Combining the last two estimates proves the claim.}
\end{proof}}

On a static slice, the proper spatial volume element is
\[
dV=\frac{4\pi r^2}{\sqrt{F_u(r)}}\,dr.
\]
Therefore the integrated negative tangential NEC is
\[
\int_{R_1}^{R_2}[-(\rho+p_t)]_+\,dV
=
\frac{\Delta M}{2}\I[u],
\]
where
\[
\I[u]
:=
\int_{R_1}^{R_2}
\frac{r^3(u')_+}{\sqrt{F_u}}\,dr.
\]
Thus $\I$ is the physical cost up to the fixed factor $\Delta M/2$.  The ansatz forces $p_r=-\rho$; consequently, this problem concerns the negative tangential NEC defect within a restricted Schwarzschild-gauge class, not a general measure of all stress-energy magnitudes.

\section{Exact relaxation and sharp minimization}\label{sec:sharp}

{Minimizing sequences may converge to profiles with jumps, so the smooth functional must be completed in a topology compatible with compactness.  This section identifies that completion exactly, despite the nonlocal dependence of the weight on the profile.}

We work with the one-dimensional space of functions of bounded variation,  $BV(\R)$.  A function $u\in L^1(\R)$ belongs to $BV(\R)$ when its distributional derivative $Du$ is a finite signed Radon measure.  The Jordan decomposition is
\[
Du=(Du)^+-(Du)^-,
\]
where $(Du)^+$ and $(Du)^-$ are mutually singular positive measures.  Here $(Du)^+$ and $(Du)^-$ are the positive and negative measures in the
Jordan decomposition of the signed measure $Du$. They should not be confused
with the pointwise scalar operations $x\mapsto x_+$ and $x\mapsto x_-$.
For a smooth function $u$, the two notions are related by
\[
d(Du)^+=(u')_+\,dr,
\qquad
d(Du)^-=(u')_-\,dr.
\]
Standard compactness and lower-semicontinuity results may be found in \cite{AmbrosioFuscoPallara2000,EvansGariepy2015}.

\begin{definition}[$BV$ admissible class]
Let $\ABV$ consist of all $u\in BV(\R)$ such that
\[
u\geq0\text{ a.e.},
\qquad
u=0\text{ a.e. on }\R\setminus[R_1,R_2],
\qquad
\int_{R_1}^{R_2}r^2u(r)\,dr=1.
\]

Notice that we use a.e. because an element of $BV(\mathbb{R})$ is usually regarded as an equivalence class of functions modulo equality almost everywhere.
{For $u\in\ABV$, one can define
$P_u, m_u, F_u, g_u $ as for the smooth $u$.
We extend $P_u$ by $0$ on
$(-\infty,R_1]$ and by $1$ on $[R_2,\infty)$, and correspondingly extend
$m_u$ by $M_1$ and $M_2$ on these regions.} Define
\[
\JBV[u]
:=
\int_{[R_1,R_2]}W_u(r)\,d(Du)^+(r),
\qquad
W_u(r):=\frac{r^3}{\sqrt{F_u(r)}}.
\]
For $u\in\Ainf$, one has $d(Du)^+=(u')_+\,dr$, and hence $\JBV[u]=\I[u]$.
\end{definition}

\begin{definition}[$L^1$-relaxation]\label{def:relaxation}
For $u\in\ABV$, define
\[
\I_{\mathrm{rel}}[u]
:=
\inf
\left\{
\liminf_{n\longrightarrow\infty}\I[u_n]:
 u_n\in\Ainf,\ u_n\longrightarrow u\text{ in }L^1(\R)
\right\}.
\]
The value is $+\infty$ when no approximating sequence exists.  {We will show in Theorem~\ref{thm:relax} that $\I_{\mathrm{rel}}[u]$ coincides with $\mathcal J[u]$. Physically, $(\Delta M/2)\I_{\mathrm{rel}}[u]$ is the least proper-volume tangential-NEC defect among smooth matter layers that converge to the idealized profile $u$. A finite value means that the idealized profile can be approximated with uniformly controlled total negative tangential NEC.}
\end{definition}

\begin{lemma}[Continuity of the nonlocal weight]\label{lem:weight-cont}
If $u,v\in\ABV$, then
\[
\|P_u-P_v\|_{L^\infty}
\leq R_2^2\|u-v\|_{L^1}.
\]
Consequently, $u_n\longrightarrow u$ in $L^1$ implies
\[
F_{u_n}\longrightarrow F_u,
\qquad
W_{u_n}\longrightarrow W_u
\]
uniformly on $[R_1,R_2]$.
\end{lemma}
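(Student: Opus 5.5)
The plan is to prove the estimate on $P_u-P_v$ directly from the definition as a weighted primitive, and then push it through the explicit formulas for $F_u$ and $W_u$, using the uniform positivity bound $F_u\geq\kappa$ recorded after the definition of $\Ainf$.

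\textbf{Step 1: the primitive.} For $r\in[R_1,R_2]$ we have
\[
|P_u(r)-P_v(r)|=\Bigl|\int_{R_1}^r s^2\bigl(u(s)-v(s)\bigr)\,ds\Bigr|
\leq R_2^2\int_{R_1}^{R_2}|u-v|\,ds\leq R_2^2\|u-v\|_{L^1}.
\]
Outside $[R_1,R_2]$ both $P_u$ and $P_v$ are extended by the same constants, $0$ on the left and $1$ on the right. This extension is consistent with the normalization $\int r^2u=1$, so the difference vanishes there. Since $u$ is only an a.e. class, $P_u$ is an absolutely continuous function well defined pointwise, and the bound holds everywhere.

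\textbf{Step 2: the metric coefficient.} From the definitions,
\[
F_u-F_v=-\frac{2\Delta M}{r}\,(P_u-P_v),
\]
so on $[R_1,R_2]$
\[
\|F_u-F_v\|_{L^\infty}\leq \frac{2\Delta M R_2^2}{R_1}\|u-v\|_{L^1}.
\]
Hence $u_n\to u$ in $L^1$ gives $F_{u_n}\to F_u$ uniformly.

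\textbf{Step 3: the weight.} For $\ABV$ profiles, $0\leq P_u\leq 1$ because $u\geq0$ a.e. and the normalization holds. Therefore $m_u\leq M_2$, and so $F_u\geq\kappa=1-2M_2/R_1>0$ on $[R_1,R_2]$, exactly as in the smooth case; also $F_u\leq1$. The map $x\mapsto x^{-1/2}$ is Lipschitz on $[\kappa,1]$ with constant $\tfrac12\kappa^{-3/2}$. This gives
\[
|W_{u_n}-W_u|\leq \tfrac12 R_2^3\kappa^{-3/2}\,|F_{u_n}-F_u|,
\]
and the result is uniform convergence of the weights.

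The only point needing care is this positivity bound for BV profiles, which rests on the monotonicity of $P_u$. Everything else is a direct estimate.
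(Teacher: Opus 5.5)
Your proposal is correct and follows the paper's proof essentially step for step. It uses the pointwise bound on $P_u-P_v$ from $s^2\le R_2^2$, the affine dependence of $F_u$ on $P_u$, and the uniform Lipschitz bound for $x\mapsto x^{-1/2}$ on $[\kappa,1]$; you additionally make the constants explicit and justify $F_u\ge\kappa$ for $BV$ profiles via $0\le P_u\le 1$, which the paper only asserts.
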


\begin{proof}
For each $r\in[R_1,R_2]$,
\[
|P_u(r)-P_v(r)|
\leq
\int_{R_1}^{R_2}s^2|u(s)-v(s)|\,ds
\leq
R_2^2\|u-v\|_{L^1}.
\]
The formula for $F_u$ now gives uniform convergence of $F_{u_n}$.  Since every admissible profile satisfies
$F_u(r)\geq\kappa>0,
$
the function $x\mapsto x^{-1/2}$ is uniformly Lipschitz on the relevant range.  This proves uniform convergence of $W_{u_n}$.
\end{proof}

{The next lemma is a weighted form of the standard lower semicontinuity of positive variation in $BV$; compare \cite[Chapters 3--5]{AmbrosioFuscoPallara2000} and \cite[Chapter 5]{EvansGariepy2015}.  We give the argument because the weight is nonlocal and varies with the sequence.}

\begin{lemma}[Lower bound under $L^1$ convergence]\label{lem:lsc}
If $u_n\in\Ainf$ and $u_n\longrightarrow u\in\ABV$ in $L^1$, then
\[
\JBV[u]
\leq
\liminf_{n\longrightarrow\infty}\I[u_n].
\]
More generally, the same lower-semicontinuity statement holds for sequences in $\ABV$ with $\JBV[u_n]$ in place of $\I[u_n]$.
\end{lemma}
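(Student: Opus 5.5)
The plan is to reduce to the classical lower semicontinuity of a fixed, continuous, nonnegative weight integrated against the positive variation measure. The sequence-dependent weight $W_{u_n}$ is replaced by the limit weight $W_u$, at a cost controlled by Lemma~\ref{lem:weight-cont}.

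First, assume without loss of generality that $\liminf_n \I[u_n]<\infty$, and pass to a subsequence realizing the liminf with $\sup_n \I[u_n]<\infty$. Since $W_{u_n}(r)\ge r^3\ge R_1^3>0$ on $[R_1,R_2]$, this gives a uniform bound on $\int (u_n')_+\,dr$. Because $u_n$ is smooth, vanishes outside $(R_1,R_2)$ and has zero total derivative, we have $\int (u_n')_-\,dr=\int (u_n')_+\,dr$. Hence $|Du_n|(\R)$ is uniformly bounded, with supports in the fixed compact set $[R_1,R_2]$.

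Second, we extract weak-$*$ limits. Passing to a further subsequence, $(u_n')_+\,dr\weakstar\mu^+$ and $(u_n')_-\,dr\weakstar\mu^-$ as finite positive measures on $[R_1,R_2]$. Since $u_n\to u$ in $L^1$, we have $Du_n\to Du$ in distributions, so $Du=\mu^+-\mu^-$. Minimality of the Jordan decomposition then gives $(Du)^+\le\mu^+$ as measures. This is the standard fact that any decomposition of a signed measure as a difference of positive measures dominates the Jordan parts; one sees it by restricting $\mu^+-\mu^-$ to a Hahn set for $Du$.

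Third, we handle the weight. By Lemma~\ref{lem:weight-cont}, $\varepsilon_n:=\sup_{[R_1,R_2]}|W_{u_n}-W_u|\to0$. Hence
\[
\I[u_n]\ge\int W_u\,(u_n')_+\,dr-\varepsilon_n\int(u_n')_+\,dr,
\]
and the last term tends to zero by the uniform mass bound. The function $W_u$ is continuous on $[R_1,R_2]$: $P_u$ is continuous (even Lipschitz) and $F_u\ge\kappa$. Extend it continuously and boundedly to $\R$, which is harmless since all measures live on the compact set $[R_1,R_2]$. Weak-$*$ convergence of measures supported in a fixed compact set then gives
\[
\int W_u\,(u_n')_+\,dr\to\int W_u\,d\mu^+\ge\int W_u\,d(Du)^+=\JBV[u],
\]
where the inequality uses $W_u\ge0$ and $(Du)^+\le\mu^+$.

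For the general statement with $u_n\in\ABV$, the same argument applies verbatim with $(Du_n)^\pm$ in place of $(u_n')_\pm\,dr$. The total-mass identity $(Du_n)^+(\R)=(Du_n)^-(\R)$ holds because $Du_n(\R)=0$ for compactly supported BV functions.

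The main obstacle is the identification $(Du)^+\le\mu^+$, since weak-$*$ limits of the positive parts need not equal the positive part of the limit: cancellation between $\mu^+$ and $\mu^-$ can occur. Only the inequality is needed, however, and it follows from the Hahn decomposition argument above. The nonlocal weight, which looks like the delicate point, is disposed of cleanly by the uniform convergence in Lemma~\ref{lem:weight-cont}.
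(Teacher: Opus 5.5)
Your proof is correct and follows essentially the same route as the paper. Both arguments bound the masses using $W_{u_n}\ge R_1^3$, extract a weak-$*$ limit, show that this limit dominates $(Du)^+$, and remove the sequence-dependent weight through the uniform convergence of Lemma~\ref{lem:weight-cont}. The only difference is in the domination step: you get it from $Du=\mu^+-\mu^-$ and minimality of the Jordan decomposition, using the bound on the negative parts that follows from $Du_n(\R)=0$, whereas the paper uses the dual formula $\int\varphi\,d(Du)^+=\sup\{\int\psi\,dDu:0\le\psi\le\varphi\}$. Your version also avoids the small density step the paper needs to pass from $C^1_c$ to $C_c$ test functions.
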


\begin{proof}
{Set $\mu_n:=(Du_n)^+$.  If $\liminf_n\I[u_n]=+\infty$, the assertion is immediate, so pass to a subsequence along which the liminf is a finite limit.  Because $W_{u_n}(r)\geq R_1^3$,}
\[
R_1^3\mu_n([R_1,R_2])
\leq
\int W_{u_n}\,d\mu_n
=
\I[u_n].
\]
{Hence the positive measures $\mu_n$ have uniformly bounded mass.  By weak-* compactness of bounded Radon measures, after passing to a further subsequence, $\mu_n\weakstar\mu$ for some positive measure $\mu$.}

{We next identify what the limit measure must dominate.  Since $u_n\longrightarrow u$ in $L^1$, the distributional derivatives satisfy $Du_n\longrightarrow Du$ in the sense of distributions: for every $\varphi\in C_c^1(\R)$,}
\[
\int \varphi\,dDu_n
=-\int u_n\varphi'\,dr
\longrightarrow
-\int u\varphi'\,dr
=
\int\varphi\,dDu.
\]
{For every nonnegative $\varphi\in C_c(\R)$, the positive variation has the dual characterization}
\[
\int\varphi\,d(Du)^+
=
\sup\left\{
\int\psi\,dDu:
\psi\in C_c(\R),\ 0\leq\psi\leq\varphi
\right\}.
\]
{Applying this formula first to $Du_n$, passing to the limit for each fixed test function, and then taking the supremum gives}
\[
\int\varphi\,d(Du)^+
\leq
\liminf_{n\longrightarrow\infty}
\int\varphi\,d(Du_n)^+.
\]
{Thus the weak-* limit $\mu$ dominates $(Du)^+$.  This is the familiar lower semicontinuity of positive variation.}

{It remains to handle the varying weight.  Lemma~\ref{lem:weight-cont} gives uniform convergence $W_{u_n}\longrightarrow W_u$.  Since the masses of $\mu_n$ are bounded,}
\[
\left|
\int(W_{u_n}-W_u)\,d\mu_n
\right|
\leq
\|W_{u_n}-W_u\|_{L^\infty}\mu_n([R_1,R_2])
\longrightarrow0.
\]
{Therefore}
\[
\begin{aligned}
\JBV[u]
&=
\int W_u\,d(Du)^+\\
&\leq
\liminf_{n\longrightarrow\infty}
\int W_u\,d(Du_n)^+\\
&=
\liminf_{n\longrightarrow\infty}
\int W_{u_n}\,d(Du_n)^+
=
\liminf_{n\longrightarrow\infty}\I[u_n].
\end{aligned}
\]
{The same reasoning applies to a sequence in $\ABV$, replacing $\I[u_n]$ by $\JBV[u_n]$.}
\end{proof}

The recovery inequality requires a little care because admissible smooth functions must vanish in neighborhoods of both endpoints.  We first move an arbitrary $BV$ profile slightly into the interval and then mollify it.

The two ingredients of the recovery construction---an affine compression away
from the endpoints, followed by mollification---are standard in
$\mathrm{BV}$ relaxation \cite{AmbrosioFuscoPallara2000,EvansGariepy2015}. We record them in a single
lemma with a compressed proof; the point requiring genuine care in this
problem is not the construction but the sequence-dependent weight, handled in
Lemmas~\ref{lem:weight-cont} and~\ref{lem:lsc}.

{
\begin{lemma}[Recovery construction]\label{lem:recovery}
Let $u\in\ABV$. There exist $u_n\in\Ainf$ with
$u_n\to u$ in $L^1(\mathbb R)$ and
\[
  \int W_u\,(u_n')^+\,dr\ \longrightarrow\ \int W_u\,d(Du)^+ .
\]
\end{lemma}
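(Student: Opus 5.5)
The plan has three steps: compress the profile affinely into a compact subinterval, mollify it, and then pass to the limit using Reshetnyak-type continuity with the fixed continuous weight $W_u$. Fix $u\in\ABV$ and put $c:=(R_1+R_2)/2$. For small $\varepsilon>0$ let $T_\varepsilon(r):=c+(1-\varepsilon)(r-c)$, which maps $[R_1,R_2]$ onto a compact subinterval $I_\varepsilon\subset(R_1,R_2)$ at distance $\varepsilon(R_2-R_1)/2$ from each endpoint. Set $v_\varepsilon:=u\circ T_\varepsilon^{-1}$, extended by zero. Then:
\begin{itemize}
\item $v_\varepsilon\ge0$ and $\supp v_\varepsilon\subset I_\varepsilon$.
\item $Dv_\varepsilon=(T_\varepsilon)_\#Du$. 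Pushforward by an increasing homeomorphism preserves the Jordan decomposition, so $(Dv_\varepsilon)^+=(T_\varepsilon)_\#(Du)^+$.
\item $v_\varepsilon\to u$ in $L^1$ as $\varepsilon\to0$, by continuity of dilations in $L^1$.
\end{itemize}
Consequently $\int W_u\,d(Dv_\varepsilon)^+=\int W_u\circ T_\varepsilon\,d(Du)^+\to\int W_u\,d(Du)^+$. This uses dominated convergence and the facts that $W_u$ is continuous on $[R_1,R_2]$ (because $F_u$ is Lipschitz and bounded below by $\kappa$) and that $T_\varepsilon\to\mathrm{id}$ uniformly.

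Next, mollify with a nonnegative even kernel $\eta_\delta$ with $\delta<\varepsilon(R_2-R_1)/4$, and set $w:=\eta_\delta*v_\varepsilon$. This function is smooth, nonnegative, and supported in $(R_1,R_2)$. Moreover $w'=\eta_\delta*Dv_\varepsilon$, so $(w')_+\le\eta_\delta*(Dv_\varepsilon)^+$ pointwise. Hence
\[
\int W_u(w')_+\,dr\le\int(\eta_\delta*W_u)\,d(Dv_\varepsilon)^+,
\]
and the right side tends to $\int W_u\,d(Dv_\varepsilon)^+$ as $\delta\to0$, because $\eta_\delta*W_u\to W_u$ uniformly on $I_\varepsilon$ (extend $W_u$ continuously outside $[R_1,R_2]$). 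For the reverse inequality, $w\to v_\varepsilon$ in $L^1$, so the lower semicontinuity of weighted positive variation from the proof of Lemma~\ref{lem:lsc}, applied with the fixed weight $W_u$, gives $\liminf\int W_u(w')_+\ge\int W_u\,d(Dv_\varepsilon)^+$.

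Finally, fix the normalization. Put $Z:=\int r^2w\,dr$ and $u_{\varepsilon,\delta}:=w/Z$. As $\varepsilon,\delta\to0$ we have $Z\to\int r^2u=1$, by $L^1$ convergence on the bounded interval with bounded weight $r^2$. Dividing by $Z$ multiplies the positive variation by $1/Z\to1$. The normalized functions therefore lie in $\Ainf$, converge to $u$ in $L^1$, and have weighted positive variations converging to $\int W_u\,d(Du)^+$. A diagonal choice of $\varepsilon_n$ and then $\delta_n$ produces the sequence $u_n$.

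The main obstacle is the mollification step, specifically getting equality of the limit and not just an upper bound. The point that needs care is that $(w')_+\le\eta_\delta*(Dv_\varepsilon)^+$ holds even when $Dv_\varepsilon$ has jump parts at which positive and negative variation nearly cancel after smoothing. The pointwise inequality handles this, because it follows from $\eta_\delta*(Dv)=\eta_\delta*(Dv)^+-\eta_\delta*(Dv)^-$ with both terms nonnegative. The matching lower bound then comes from Lemma~\ref{lem:lsc}'s argument. The weight here is the fixed $W_u$ and not $W_{u_n}$, so no nonlocal issue arises in this lemma.
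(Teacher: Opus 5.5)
Your proposal is correct and follows essentially the same route as the paper. The paper also compresses affinely using $(Dv_\varepsilon)^+=(T_\varepsilon)_\#(Du)^+$, then mollifies, bounding from above via $(w')_+\le\eta_\delta*(Dv_\varepsilon)^+$ and from below via the fixed-weight lower semicontinuity of Lemma~\ref{lem:lsc}, and finishes by renormalizing and choosing a diagonal sequence; the only differences are your parametrization of the compression map and your use of an even kernel, which spares the reflected kernel $\check\rho_\delta$.
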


\begin{proof}
\emph{Compression.} Let $c=(R_1+R_2)/2$, $\lambda_\varepsilon=1-4\varepsilon/(R_2-R_1)$
and $T_\varepsilon(r)=c+\lambda_\varepsilon(r-c)$, an increasing affine map of
$[R_1,R_2]$ onto $[R_1+2\varepsilon,R_2-2\varepsilon]$. Put
$v_\varepsilon:=u\circ T_\varepsilon^{-1}$. Then $Dv_\varepsilon=(T_\varepsilon)_\#Du$, and since
$T_\varepsilon$ is increasing it preserves the Jordan decomposition, so
$(Dv_\varepsilon)^\pm=(T_\varepsilon)_\#(Du)^\pm$. As $T_\varepsilon\to\mathrm{id}$
uniformly and $W_u$ is continuous,
\[
  \int W_u\,d(Dv_\varepsilon)^+=\int W_u(T_\varepsilon(r))\,d(Du)^+(r)
  \longrightarrow\int W_u\,d(Du)^+ ,
\]
and a change of variables gives $v_\varepsilon\to u$ in $L^1$ together with
$\int r^2v_\varepsilon\,dr=\lambda_\varepsilon\int T_\varepsilon(s)^2u(s)\,ds\to1$.

\emph{Mollification.} Fix $\varepsilon$ and let $w=v_\varepsilon*\rho_\delta$ with
$\delta<\varepsilon$, so that $w\in C_c^\infty((R_1,R_2))$ is nonnegative and
$w\to v_\varepsilon$ in $L^1$ as $\delta\downarrow0$. Since
$Dw=(Dv_\varepsilon)*\rho_\delta$ and the positive part of a difference of
positive measures is dominated by the first,
$(Dw)^+\le(Dv_\varepsilon)^+*\rho_\delta$; testing against the continuous
weight and using $\check\rho_\delta*W_u\to W_u$ uniformly gives
\[
  \limsup_{\delta\downarrow0}\int W_u\,(w')^+\,dr
  \ \le\ \int W_u\,d(Dv_\varepsilon)^+ .
\]
The reverse inequality is the lower semicontinuity of Lemma~\ref{lem:lsc} applied with
the fixed weight $W_u$, so the limit exists and equals
$\int W_u\,d(Dv_\varepsilon)^+$.

\emph{Diagonal and normalization.} Choose $\varepsilon_n\downarrow0$ and, for
each $n$, a mollification $w_n$ of $v_{\varepsilon_n}$ with
$\|w_n-v_{\varepsilon_n}\|_{L^1}\le1/n$,
$\bigl|\int W_u(w_n')^+dr-\int W_u\,d(Dv_{\varepsilon_n})^+\bigr|\le1/n$ and
$\bigl|\int r^2w_n\,dr-\int r^2v_{\varepsilon_n}\,dr\bigr|\le1/n$. 
Setting $c_n:=\bigl(\int r^2w_n\,dr\bigr)^{-1}$ and $u_n:=c_nw_n$ gives
$u_n\in\Ainf$. Since $c_n\to1$ and $\|w_n\|_{L^1}$ is bounded,
$u_n\to u$ in $L^1$; and since $c_n>0$ gives $(u_n')^+=c_n(w_n')^+$, the
energy limit is unchanged.
\end{proof}
}

\begin{theorem}[Exact $L^1$-relaxation]\label{thm:relax}
For every $u\in\ABV$,
\[
  \I_{\mathrm{rel}}[u]=\mathcal J[u].
\]
In particular the infimum in Definition~\ref{def:relaxation} is over a nonempty set and is
finite, and it is attained along a recovery sequence.
\end{theorem}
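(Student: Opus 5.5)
The proof splits into the usual two inequalities of $\Gamma$-type relaxation: a liminf bound from Lemma~\ref{lem:lsc} and a recovery sequence from Lemma~\ref{lem:recovery}, glued together by the weight continuity of Lemma~\ref{lem:weight-cont}.

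\emph{Lower bound.} Let $u\in\ABV$ and let $u_n\in\Ainf$ be any sequence with $u_n\to u$ in $L^1$. Lemma~\ref{lem:lsc} gives $\JBV[u]\le\liminf_n\I[u_n]$ directly. Taking the infimum over all such sequences yields $\JBV[u]\le\I_{\mathrm{rel}}[u]$. This holds trivially when no sequence exists, but the second step will show one always does.

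\emph{Upper bound.} Take the sequence $u_n\in\Ainf$ from Lemma~\ref{lem:recovery}. It satisfies $u_n\to u$ in $L^1$ and
\[
\int W_u(u_n')_+\,dr\to\int W_u\,d(Du)^+=\JBV[u].
\]
The remaining issue is that $\I[u_n]$ uses the sequence-dependent weight $W_{u_n}$ rather than the fixed weight $W_u$. I would bound the difference by
\[
\left|\I[u_n]-\int W_u(u_n')_+\,dr\right|\le\|W_{u_n}-W_u\|_{L^\infty[R_1,R_2]}\int_{R_1}^{R_2}(u_n')_+\,dr.
\]
Lemma~\ref{lem:weight-cont} sends the sup-norm factor to zero. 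For the mass factor, the lower bound $W_u\ge R_1^3$ gives
\[
R_1^3\int(u_n')_+\,dr\le\int W_u(u_n')_+\,dr,
\]
and the right-hand side converges to a finite limit. This requires $\JBV[u]<\infty$, which holds because $u\in BV$ and $W_u\le R_2^3\kappa^{-1/2}$ on $[R_1,R_2]$. Hence the total positive variations are uniformly bounded, the difference tends to zero, and $\I[u_n]\to\JBV[u]$. Consequently $\I_{\mathrm{rel}}[u]\le\liminf_n\I[u_n]=\JBV[u]$.

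Combining the two bounds gives $\I_{\mathrm{rel}}[u]=\JBV[u]$. The set of approximating sequences is nonempty because the recovery sequence belongs to it, and the value is finite by the bound $\JBV[u]\le R_2^3\kappa^{-1/2}|Du|(\R)$. The infimum is attained, as an actual limit, along the recovery sequence.

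The only delicate point is the weight swap in the upper bound. There, uniform convergence of the nonlocal weight must be paired with a uniform mass bound on $(u_n')_+$, and that bound comes from the fixed-weight energy convergence in Lemma~\ref{lem:recovery} together with $W_u\ge R_1^3$.
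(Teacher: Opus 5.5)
Your proposal is correct and follows the paper's proof essentially step for step. You take the liminf inequality from Lemma~\ref{lem:lsc}, then use the recovery sequence of Lemma~\ref{lem:recovery}, with the swap from the fixed weight $W_u$ to $W_{u_n}$ controlled by $\|W_{u_n}-W_u\|_{L^\infty}\to0$ (Lemma~\ref{lem:weight-cont}) together with the uniform bound on $\int(u_n')_+\,dr$ coming from $W_u\ge R_1^3$; finiteness via $W_u\le R_2^3\kappa^{-1/2}$ is exactly the paper's argument as well.
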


{\begin{proof}
Throughout, recall the two-sided bound on the weight: since
$0<\kappa\le F_u\le1$ on $[R_1,R_2]$ for every $u\in\ABV$,
\begin{equation}\label{eq:weightbounds}
  R_1^3\ \le\ W_u(r)=\frac{r^3}{\sqrt{F_u(r)}}\ \le\ \frac{R_2^3}{\sqrt\kappa},
  \qquad r\in[R_1,R_2].
\end{equation}
The upper bound gives $\mathcal J[u]\le\kappa^{-1/2}R_2^3\,(Du)^+([R_1,R_2])<\infty$,
so both sides of the asserted identity are finite.

\smallskip
\emph{Step 1: $\mathcal J[u]\le \I_{\mathrm{rel}}[u]$.}
Let $u_n\in\Ainf$ be any sequence with $u_n\to u$ in $L^1(\mathbb R)$.
Lemma~\ref{lem:lsc} gives
\[
  \mathcal J[u]\ \le\ \liminf_{n\to\infty}\I[u_n].
\]
Taking the infimum over all such sequences yields
$\mathcal J[u]\le \I_{\mathrm{rel}}[u]$. (If no admissible approximating
sequence existed, $\I_{\mathrm{rel}}[u]=+\infty$ and the inequality would be
vacuous; Step 2 shows this does not occur.)

\smallskip
\emph{Step 2: $\I_{\mathrm{rel}}[u]\le\mathcal J[u]$.}
Let $u_n\in\Ainf$ be the sequence supplied by
Lemma~\ref{lem:recovery}, so that $u_n\to u$ in $L^1(\mathbb R)$ and
\begin{equation}\label{eq:fixedweight}
  \int W_u\,(u_n')^+\,dr\ \longrightarrow\ \int W_u\,d(Du)^+=\mathcal J[u].
\end{equation}
We must replace the fixed weight $W_u$ in \eqref{eq:fixedweight} by the
correct sequence-dependent weight $W_{u_n}$.

First, the total positive variations are uniformly bounded: by the lower
bound in \eqref{eq:weightbounds},
\[
  \int (u_n')^+\,dr\ \le\ R_1^{-3}\int W_u\,(u_n')^+\,dr,
\]
and the right-hand side converges by \eqref{eq:fixedweight}, hence
\[
  V:=\sup_n\int (u_n')^+\,dr<\infty.
\]
Second, since $u_n\to u$ in $L^1$, Lemma~\ref{lem:weight-cont} gives
$\|W_{u_n}-W_u\|_{L^\infty([R_1,R_2])}\to0$. Therefore
\[
  \left|\int\bigl(W_{u_n}-W_u\bigr)(u_n')^+\,dr\right|
  \ \le\ \|W_{u_n}-W_u\|_{L^\infty}\int (u_n')^+\,dr
  \ \le\ V\,\|W_{u_n}-W_u\|_{L^\infty}\ \longrightarrow\ 0 .
\]
Combining this with \eqref{eq:fixedweight},
\[
  \I[u_n]=\int W_{u_n}(u_n')^+\,dr
  =\int W_u\,(u_n')^+\,dr+o(1)
  \ \longrightarrow\ \mathcal J[u].
\]
Thus $(u_n)$ is an admissible competitor in Definition~\ref{def:relaxation} with
$\lim_n \I[u_n]=\mathcal J[u]$, whence
$\I_{\mathrm{rel}}[u]\le\mathcal J[u]$.

\smallskip
The two steps give $\I_{\mathrm{rel}}[u]=\mathcal J[u]$, and Step 2 exhibits a
sequence along which the infimum is attained.
\end{proof}

We now solve the relaxed problem.  Define
\[
A(r):=\int_r^{R_2}s^2\,ds=\frac{R_2^3-r^3}{3}.
\]

\begin{lemma}[Exact normalization identity]\label{lem:normalization}
Let $u\in\ABV$ and write
\[
\mu:=(Du)^+,
\qquad
\nu:=(Du)^-.
\]
Then
\[
1
=
\int A\,d\mu-
\int A\,d\nu.
\]
Moreover,
\[
\mu([R_1,R_2])=\nu([R_1,R_2]).
\]
\end{lemma}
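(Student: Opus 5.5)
The identity is an integration by parts of the normalization against the derivative measure $Du$. We pair $Du$ with a suitable smooth extension of $A$ and use that $u$ is compactly supported.

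\emph{Step 1: the identity for $Du$.} Choose $\tilde A\in C_c^1(\R)$ with $\tilde A=A$ on a neighborhood of $[R_1,R_2]$. Any $C^1$ extension of the polynomial $A$, cut off far away, will do. The support of $Du$ lies in $[R_1,R_2]$, because $u=0$ a.e.\ outside this interval and so $Du$ vanishes on the open complement. By the definition of the distributional derivative, extended by density from $C_c^\infty$ to $C_c^1$ test functions since $Du$ is a finite measure,
\[
\int \tilde A\,dDu=-\int u\,\tilde A'\,dr=\int_{R_1}^{R_2} r^2u(r)\,dr=1,
\]
using $A'(r)=-r^2$ on $[R_1,R_2]$. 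Because $\tilde A=A$ on the support of $Du$, the left side equals $\int A\,d\mu-\int A\,d\nu$. This gives the first claim.

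\emph{Step 2: equality of masses.} Repeat the argument with the test function $\tilde 1\in C_c^1$ equal to $1$ near $[R_1,R_2]$:
\[
\int \tilde 1\,dDu=-\int u\,\tilde 1'\,dr=0,
\]
since $\tilde 1'=0$ wherever $u\neq0$. Hence $Du(\R)=0$, that is, $\mu([R_1,R_2])=\nu([R_1,R_2])$. Both measures are finite and supported in $[R_1,R_2]$.

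\emph{Main point of care.} The endpoints need attention: $\mu$ and $\nu$ may charge $R_1$ or $R_2$ through jumps of $u$ there. For example, $u_*$ has $\nu=H_*\delta_{R_2}$. This is why the test functions must equal $A$, respectively $1$, on a full neighborhood of the closed interval $[R_1,R_2]$, so that the Dirac masses at the endpoints are correctly accounted for. The Jordan decomposition then splits $\int A\,dDu$ into the two stated integrals.
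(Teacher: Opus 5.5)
Your proof is correct and follows the paper's argument: the paper also integrates $A$ by parts against $Du$ using $A'=-r^2$ and the zero extension, then uses $(Du)(\R)=0$ for the equal masses. Your compactly supported extension $\tilde A$ and the cutoff equal to $1$ on a neighborhood of $[R_1,R_2]$ simply spell out the test-function justification, including the endpoint atoms, that the paper leaves implicit under ``$BV$ integration by parts.''
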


\begin{proof}
Since $A'(r)=-r^2$ and the zero extension is used, $BV$ integration by parts gives
\[
\int A\,dDu
=-\int uA'\,dr
=\int r^2u(r)\,dr
=1.
\]
Substituting $Du=\mu-\nu$ proves the first identity.  The zero extension vanishes at both ends of the real line, so $(Du)(\R)=0$, which is exactly the equality of the total positive and negative variations.
\end{proof}
}

Define
\[
q(r):=\frac{r^3}{\sqrt{1-2M_1/r}}
=\frac{r^{7/2}}{\sqrt{r-2M_1}},
\qquad
\Psi(r):=\frac{q(r)}{A(r)}.
\]

\begin{lemma}[Unique optimal radius]\label{lem:radius}
The function $\Psi$ has exactly one critical point $x_0$ in $(2M_1,R_2)$.  It is
the unique real solution of
\[
M_1x_0^3+3R_2^3x_0-7M_1R_2^3=0
\]
and satisfies
\[
2M_1<x_0<\frac{7M_1}{3}.
\]
{The function $\Psi$ is strictly decreasing on $(2M_1,x_0)$ and strictly
increasing on $(x_0,R_2)$. } Consequently, its unique minimizer on $[R_1,R_2)$ is
\[
a=\max\{R_1,x_0\}.
\]
\end{lemma}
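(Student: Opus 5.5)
Since $\Psi=q/A$ with $q,A>0$ on $(2M_1,R_2)$, the plan is to work with the logarithmic derivative
\[
\frac{\Psi'}{\Psi}=\frac{q'}{q}-\frac{A'}{A}
=\frac{7}{2r}-\frac{1}{2(r-2M_1)}+\frac{3r^2}{R_2^3-r^3}.
\]
Putting this over the common denominator $2r(r-2M_1)(R_2^3-r^3)$, which is positive on $(2M_1,R_2)$, the sign of $\Psi'$ is the sign of
\[
N(r):=7(r-2M_1)(R_2^3-r^3)-r(R_2^3-r^3)+6r^3(r-2M_1).
\]
Expanding gives $N(r)=6R_2^3r-14M_1R_2^3-r^4+14M_1r^3-7r^4+r^4+6r^4-12M_1r^3$. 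The quartic terms cancel, since $-7+1+6=0$. Collecting the rest yields
\[
N(r)=2\bigl(M_1r^3+3R_2^3r-7M_1R_2^3\bigr)=:2p(r).
\]
I will double-check this bookkeeping, as the whole lemma rests on it.

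Next, $p'(r)=3M_1r^2+3R_2^3>0$, so $p$ is strictly increasing on $\R$ and has exactly one real root, $x_0$. To locate it, evaluate $p(2M_1)=8M_1^4+6M_1R_2^3-7M_1R_2^3=M_1(8M_1^3-R_2^3)$. This is negative because $R_2>2M_2>2M_1$. Likewise $p(7M_1/3)=M_1(7M_1/3)^3>0$. Hence $2M_1<x_0<7M_1/3$. Since $7M_1/3<R_2$, as $R_2>2M_2>2M_1$ and in fact $R_2>R_1>2M_2$, the root lies in $(2M_1,R_2)$. The sign of $\Psi'$ then equals the sign of $p$, which gives strict decrease of $\Psi$ on $(2M_1,x_0)$ and strict increase on $(x_0,R_2)$. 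In particular $x_0$ is the unique critical point.

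For the minimizer on $[R_1,R_2)$, there are two cases. If $R_1\le x_0$, monotonicity gives $\Psi\ge\Psi(x_0)$ with equality only at $x_0$. If $R_1>x_0$, then $\Psi$ is strictly increasing on $[R_1,R_2)$, so $R_1$ is the unique minimizer. Both cases give $a=\max\{R_1,x_0\}$. Existence of the minimum is automatic here, because $\Psi\to\infty$ as $r\uparrow R_2$ ($A\to0$) and the monotonicity picture is explicit.

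The main obstacle is purely the algebra producing $N=2p$, in particular the cancellation of the quartic terms. The remaining steps are sign checks.
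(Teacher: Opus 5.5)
Your overall route matches the paper's: you use the logarithmic derivative, reduce the sign of $\Psi'$ to the strictly increasing cubic $p$, and evaluate $p$ at $2M_1$ and $7M_1/3$. Your identity $N=2p$ is correct. The expanded line you wrote does contain a stray extra $-r^4$ term: as written the quartic coefficients sum to $-1$, not $0$, though the collected result is right.

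There is, however, a genuine gap in placing $x_0$ inside $(2M_1,R_2)$. You assert $7M_1/3<R_2$ ``as $R_2>R_1>2M_2$'', but this does not follow. The hypotheses only give $R_2>2M_2$, and $2M_2<7M_1/3$ whenever $\Delta M/M_1<1/6$. That is exactly the low-mass-contrast regime in which the interior phase can occur. For instance, $M_1=1$, $M_2=1.01$, $R_1=2.03$, $R_2=2.2$ is admissible and has $R_2<7/3$. In such cases your argument gives only $x_0\in(2M_1,7M_1/3)$, which does not exclude $x_0\ge R_2$. So two claims are not established as written: that $\Psi$ has a critical point in $(2M_1,R_2)$ at all, and that $\Psi$ is increasing on $(x_0,R_2)$.

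The repair is one line, and it is what the paper does. Evaluate $p(R_2)=M_1R_2^3+3R_2^4-7M_1R_2^3=3R_2^3(R_2-2M_1)>0$, so the unique real root of $p$ lies in $(2M_1,R_2)$. The bound $x_0<7M_1/3$ then follows from $p(7M_1/3)=343M_1^4/27>0$ and the monotonicity of $p$ on all of $\R$, whether or not $7M_1/3$ lies below $R_2$.

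Alternatively, you can use a fact you already noted. $\Psi\to+\infty$ as $r\uparrow R_2$, since $A\to0$. It also tends to $+\infty$ as $r\downarrow 2M_1$, since $q\to\infty$ while $A(2M_1)>0$. Hence $\Psi$ attains an interior minimum on $(2M_1,R_2)$, and $p$ must vanish there. With either repair, the rest of your argument goes through, including the case analysis giving $a=\max\{R_1,x_0\}$.
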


\begin{proof}
Logarithmic differentiation yields
\[
\frac{\Psi'(r)}{\Psi(r)}
=
\frac{M_1r^3+3R_2^3r-7M_1R_2^3}
{r(r-2M_1)(R_2^3-r^3)}.
\]
{
On $(2M_1,R_2)$ the three factors $r$, $r-2M_1$ and $R_2^3-r^3$ are positive, as
is $\Psi=q/A$; hence $\Psi'$ has the sign of the numerator
\[
  N(r):=M_1r^3+3R_2^3r-7M_1R_2^3 .
\]
The function $N$ is strictly increasing on all of $\R$, since
$N'(r)=3M_1r^2+3R_2^3>0$, so it has at most one zero.  At the endpoints of the
interval,
\[
  N(2M_1)=M_1\bigl(8M_1^3-R_2^3\bigr)<0,
  \qquad
  N(R_2)=3R_2^3(R_2-2M_1)>0,
\]
the first because $R_2>2M_2>2M_1$ gives $R_2^3>8M_1^3$.  Hence $N$ has exactly
one zero $x_0$, and $x_0\in(2M_1,R_2)$.  Moreover
\[
  N\!\left(\frac{7M_1}{3}\right)=\frac{343M_1^4}{27}>0,
\]
the terms in $R_2^3$ canceling identically, so monotonicity of $N$ on $\R$
gives $x_0<7M_1/3$ whether or not $7M_1/3$ lies below $R_2$.

Since $N<0$ on $(2M_1,x_0)$ and $N>0$ on $(x_0,R_2)$, the function $\Psi$ is
strictly decreasing on the first interval and strictly increasing on the second.
As $[R_1,R_2)\subset(2M_1,R_2)$, its unique minimizer there is $x_0$ when
$R_1\leq x_0$ and $R_1$ when $R_1>x_0$, that is, $a=\max\{R_1,x_0\}$.}
\end{proof}

Set
\[
\lambda_*:=\Psi(a),
\qquad
H_*:=\frac1{A(a)},
\qquad
u_*(r):=H_*\one_{(a,R_2)}(r).
\]
Write $P_*:=P_{u_*}$, $m_*:=m_{u_*}$, and $F_*:=F_{u_*}$.

\begin{theorem}[Sharp value and unique relaxed minimizer]\label{thm:sharp}
For every $u\in\ABV$,
\[
\JBV[u]\geq\lambda_*.
\]
Equality holds exactly for
\[
u=u_*=H_*\one_{(a,R_2)}.
\]
Consequently,
\[
\min_{\ABV}\JBV
=
\inf_{\Ainf}\I
=
\lambda_*,
\]
and the smooth infimum is not attained.
\end{theorem}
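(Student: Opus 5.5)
The plan is to bound the nonlocal weight $W_u$ from below by the profile-independent weight $q$. This reduces $\JBV$ to a linear functional of $\mu=(Du)^+$, which Lemma~\ref{lem:normalization} and Lemma~\ref{lem:radius} then control.

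\emph{Step 1 (pointwise weight comparison).} For every $u\in\ABV$ we have $0\le P_u\le1$, hence $m_u\ge M_1$ and
\[
F_u(r)=1-\frac{2m_u(r)}{r}\le 1-\frac{2M_1}{r}.
\]
Since $F_u\ge\kappa>0$, this gives $W_u(r)\ge q(r)$ on $[R_1,R_2]$. The inequality is strict wherever $P_u(r)>0$, because $\Delta M>0$. Note that $P_u$ is continuous, since $r^2u\in L^1$.

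\emph{Step 2 (linear lower bound).} By Lemma~\ref{lem:radius}, $\Psi\ge\Psi(a)=\lambda_*$ on $[R_1,R_2)$, so $q\ge\lambda_*A$ there. At $r=R_2$ we have $A(R_2)=0<q(R_2)$, so $q\ge\lambda_*A$ on all of $[R_1,R_2]$. Moreover the inequality is strict away from $r=a$, because $\Psi$ has its unique minimizer on $[R_1,R_2)$ at $a$. Writing $\nu=(Du)^-$ and using $A\ge0$ on $[R_1,R_2]$ together with Lemma~\ref{lem:normalization},
\[
\JBV[u]\ \ge\ \int q\,d\mu\ \ge\ \lambda_*\int A\,d\mu
=\lambda_*\Bigl(1+\int A\,d\nu\Bigr)\ \ge\ \lambda_*.
\]

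\emph{Step 3 (equality case).} If $\JBV[u]=\lambda_*$, all three inequalities are equalities.
\begin{enumerate}[label=(\alph*)]
\item Equality in the middle inequality forces $\mu$ to be concentrated on $\{a\}$, since $q>\lambda_*A$ elsewhere and both are continuous. So $\mu=c\,\delta_a$ with $c\ge0$.
\item Equality in the last inequality gives $\int A\,d\nu=0$. Since $A>0$ on $[R_1,R_2)$, $\nu$ is concentrated at $R_2$, so $\nu=c'\delta_{R_2}$.
\item By Lemma~\ref{lem:normalization}, $c=c'$. Because $u$ vanishes to the left of $R_1$, it is determined by $Du$, so $u=c\,\one_{(a,R_2)}$ a.e.
\item The normalization then gives $c=1/A(a)=H_*$, that is, $u=u_*$.
\end{enumerate}
Conversely, $D u_*=H_*(\delta_a-\delta_{R_2})$ and $P_*(a)=0$, so $W_{u_*}(a)=q(a)$. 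Hence $\JBV[u_*]=H_*q(a)=\Psi(a)=\lambda_*$.

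The step needing the most care is this equality analysis. There one must use the strictness of $q>\lambda_*A$ at every point other than $a$, including the endpoint $R_2$ where $\Psi$ blows up. Working with $q-\lambda_*A$ rather than $\Psi$ avoids that singularity. The equality case in Step 1 ($P_u(a)=0$) is then automatically consistent with Step 3 and need not be used.

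\emph{Step 4 (smooth infimum).} By Theorem~\ref{thm:relax} and Lemma~\ref{lem:recovery}, there is a sequence $u_n\in\Ainf$ with $\I[u_n]\to\JBV[u_*]=\lambda_*$. Also, for $u\in\Ainf$ we have $\I[u]=\JBV[u]\ge\lambda_*$. Together these give
\[
\inf_{\Ainf}\I=\lambda_*=\min_{\ABV}\JBV.
\]
The infimum is not attained: a smooth $u\in\Ainf\subset\ABV$ with $\I[u]=\lambda_*$ would equal $u_*$ a.e. But $u_*$ has a jump of size $H_*>0$ at $a$, so it has no continuous representative. This contradiction shows no smooth profile attains $\lambda_*$.
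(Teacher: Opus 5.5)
Your proposal is correct and follows essentially the same route as the paper: the comparison $W_u\ge q\ge\lambda_*A$ on $[R_1,R_2]$, the normalization identity $\int A\,d\mu=1+\int A\,d\nu$, an equality analysis forcing $\mu$ to sit at $a$ and $\nu$ at $R_2$, and the recovery sequence of Theorem~\ref{thm:relax} for the smooth infimum. The differences are cosmetic: you match the two atom masses via equal total variation before invoking normalization, and you spell out the strictness of $q>\lambda_*A$ at $R_2$ and the check $W_{u_*}(a)=q(a)$, both of which the paper leaves implicit.
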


\begin{proof}
Let $\mu=(Du)^+$ and $\nu=(Du)^-$.  Since $P_u\geq0$,
\[
F_u(r)\leq1-\frac{2M_1}{r},
\]
and hence
\[
W_u(r)\geq q(r).
\]
By the definition of $\lambda_*$,
\[
q(r)\geq\lambda_*A(r).
\]
Using Lemma~\ref{lem:normalization},
\[
\begin{aligned}
\JBV[u]
&\geq\int q\,d\mu
\geq\lambda_*\int A\,d\mu\\
&=\lambda_*\left(1+\int A\,d\nu\right)
\geq\lambda_*.
\end{aligned}
\]

Suppose equality holds.  Every nonnegative loss in this chain must vanish.  First, $q-\lambda_*A$ vanishes $\mu$-almost everywhere.  By Lemma~\ref{lem:radius}, its only zero is $a$, so $\mu=H\delta_a$.  Second,
\[
\int A\,d\nu=0.
\]
Because $A>0$ on $[R_1,R_2)$ and $A(R_2)=0$, the measure $\nu$ is supported at $R_2$, say $\nu=K\delta_{R_2}$.  The normalization identity gives $HA(a)=1$, hence $H=H_*$.  Equality of the total positive and negative variations gives $K=H_*$.  Thus
\[
Du=H_*\delta_a-H_*\delta_{R_2},
\]
and the zero extension implies $u=u_*$ almost everywhere.  Conversely, direct substitution gives $\JBV[u_*]=\lambda_*$.

The exact relaxation theorem provides smooth recovery sequences for $u_*$.  Therefore the smooth infimum is $\lambda_*$.  A smooth minimizer would also minimize $\JBV$ and hence equal the discontinuous profile $u_*$ almost everywhere, which is impossible.
\end{proof}

\section{Deficit, sharp stability, and minimizing sequences}\label{sec:stability}

{The deficit identity below separates the three ways in which a profile can lose optimality and turns that separation into sharp control of both the profile and the geometry.}

Define
\[
g(r):=q(r)-\lambda_*A(r),
\qquad
\delta[u]:=\JBV[u]-\lambda_*.
\]
The deficit is nonnegative and vanishes only at $u_*$.  The next identity explains precisely how energy is lost away from the optimizer.

\begin{proposition}[Exact deficit decomposition]\label{prop:deficit}
For every $u\in\ABV$, with $\mu=(Du)^+$ and $\nu=(Du)^-$,
\[
{
\delta[u]
=
\int(W_u-q)\,d\mu
+
\int g\,d\mu
+
\lambda_*\int A\,d\nu.
}
\]
All three terms are nonnegative.
\end{proposition}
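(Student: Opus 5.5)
The plan is to write $\JBV[u]=\int W_u\,d\mu$ and add and subtract $q$ and then $\lambda_*A$ inside the integrand. This gives the algebraic identity
\[
\JBV[u]=\int (W_u-q)\,d\mu+\int g\,d\mu+\lambda_*\int A\,d\mu .
\]
The remaining step is to use Lemma~\ref{lem:normalization}, which gives $\int A\,d\mu=1+\int A\,d\nu$. Substituting and subtracting $\lambda_*$ yields exactly the claimed decomposition of $\delta[u]$. Every integral here is finite: $\mu$ and $\nu$ are finite measures on $[R_1,R_2]$, and $W_u$, $q$, $A$ are bounded continuous there by \eqref{eq:weightbounds}. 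So no cancellation issue arises in splitting the integrals.

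Nonnegativity is checked term by term, reusing the chain from the proof of Theorem~\ref{thm:sharp}.

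For the first term, $P_u\geq0$ gives $m_u\geq M_1$ and hence $F_u\leq 1-2M_1/r$. Then $W_u\geq q$ pointwise on $[R_1,R_2]$, so $\int(W_u-q)\,d\mu\geq0$.

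For the second term, Lemma~\ref{lem:radius} says $a$ is the minimizer of $\Psi=q/A$ on $[R_1,R_2)$. Therefore $q\geq\lambda_*A$ there, i.e.\ $g\geq0$. At $r=R_2$ we have $A(R_2)=0$ and $q(R_2)>0$, so $g>0$ there as well. Thus $g\geq0$ on all of $[R_1,R_2]$, the support of $\mu$, and $\int g\,d\mu\geq 0$.

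For the third term, $A=(R_2^3-r^3)/3\geq0$ on $[R_1,R_2]$, $\lambda_*>0$, and $\nu\geq0$.

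The only point needing care is the endpoint $R_2$ and, more generally, that the measures live on the closed interval $[R_1,R_2]$, including possible atoms at $R_1$ and $R_2$ coming from the zero extension. The pointwise inequality $q\geq\lambda_*A$ must therefore be verified on the closed interval, not just on $[R_1,R_2)$ where Lemma~\ref{lem:radius} applies. The observation $A(R_2)=0<q(R_2)$ settles this. Everything else is the bookkeeping already implicit in Theorem~\ref{thm:sharp}.
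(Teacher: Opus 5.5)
Your proposal is correct and follows the paper's proof: you add and subtract $q$ and $\lambda_*A$, then use Lemma~\ref{lem:normalization} to turn $\lambda_*\bigl(\int A\,d\mu-1\bigr)$ into $\lambda_*\int A\,d\nu$, with nonnegativity taken from the chain in the proof of Theorem~\ref{thm:sharp}. Your explicit check that $g(R_2)=q(R_2)>0$, at the endpoint not covered by Lemma~\ref{lem:radius}, makes precise a point the paper leaves implicit.
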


\begin{proof}
Add and subtract $q$ and $\lambda_*A$:
\[
\begin{aligned}
\JBV[u]-\lambda_*
={}&
\int(W_u-q)\,d\mu
+
\int(q-\lambda_*A)\,d\mu\\
&+
\lambda_*\left(\int A\,d\mu-1\right).
\end{aligned}
\]
Lemma~\ref{lem:normalization} turns the final parenthesis into $\int A\,d\nu$.  The proof of Theorem~\ref{thm:sharp} shows that each term is nonnegative.
\end{proof}

The three terms have distinct meanings.  The first penalizes upward variation after the cumulative mass $P_u$ has become positive.  Indeed, with $F_0(r)=1-2M_1/r$, rationalization gives
\[
W_u(r)-q(r)
=
\frac{2\Delta M r^2P_u(r)}
{\sqrt{F_u(r)}\sqrt{F_0(r)}
\left(\sqrt{F_0(r)}+\sqrt{F_u(r)}\right)}.
\]
The second term forces the positive derivative measure toward the optimal radius $a$.  The third forces the negative derivative measure toward $R_2$, because $A$ vanishes only there.

\begin{lemma}[Coercivity of the localization gap]\label{lem:coercivity}
There are two regimes.

\begin{enumerate}[label=\textup{(\alph*)}]
\item If $R_1\leq x_0$, including the critical boundary case $R_1=x_0$, then $a=x_0$ and there exists $c_2>0$ such that
\[
g(r)\geq c_2|r-a|^2
\]
for every $r\in[R_1,R_2]$.

\item If $R_1>x_0$, then $a=R_1$ and there exists $c_1>0$ such that
\[
g(r)\geq c_1(r-R_1)
\]
for every $r\in[R_1,R_2]$.
\end{enumerate}
\end{lemma}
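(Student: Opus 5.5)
The plan is to write $g=A(\Psi-\lambda_*)=A(r)\bigl(\Psi(r)-\Psi(a)\bigr)$ and then use the monotonicity structure of $\Psi$ from Lemma~\ref{lem:radius}. The difficulty is that $A(R_2)=0$, so $g(R_2)=0$. This degeneracy must be handled: $\Psi\to+\infty$ as $r\uparrow R_2$, and in fact $g=q-\lambda_*A$ extends continuously to $[R_1,R_2]$ with $g(R_2)=q(R_2)>0$. So I will work directly with $g$ on the compact interval $[R_1,R_2]$. There $g$ is smooth, since $q$ is smooth for $r>2M_1$ and $A$ is a polynomial. Moreover $g>0$ except at $r=a$, because on $[R_1,R_2)$ we have $A>0$ and $\Psi>\Psi(a)$ away from $a$ by Lemma~\ref{lem:radius}, and $g(R_2)=q(R_2)>0$.

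\emph{Case (a), $a=x_0$.} At $x_0$ we have $g(x_0)=0$, and $g'(x_0)=A(x_0)\Psi'(x_0)+A'(x_0)(\Psi(x_0)-\lambda_*)=0$. Next,
\[
g''(x_0)=A(x_0)\Psi''(x_0)+2A'(x_0)\Psi'(x_0)+A''(x_0)\bigl(\Psi(x_0)-\lambda_*\bigr)=A(x_0)\Psi''(x_0).
\]
From the logarithmic derivative, $\Psi'=\Psi N/D$ with $D=r(r-2M_1)(R_2^3-r^3)$. Hence $\Psi''(x_0)=\Psi(x_0)N'(x_0)/D(x_0)>0$, since $N(x_0)=0$ and $N'>0$. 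Therefore $g''(x_0)>0$, and $g(r)\ge c|r-x_0|^2$ holds on some neighborhood $|r-x_0|\le\eta$ by Taylor's theorem. On the compact complement $\{r\in[R_1,R_2]:|r-x_0|\ge\eta\}$, the function $g$ is continuous and strictly positive, so it has a positive minimum $m_\eta$. There $g\ge m_\eta(R_2-R_1)^{-2}|r-a|^2$. Taking $c_2$ to be the smaller of the two constants gives the claim. When $R_1=x_0$ only the one-sided neighborhood is relevant, and the same argument applies.

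\emph{Case (b), $a=R_1>x_0$.} Now $g(R_1)=0$, and
\[
g'(R_1)=A(R_1)\Psi'(R_1)+A'(R_1)\cdot 0=A(R_1)\Psi'(R_1)>0,
\]
because $N(R_1)>0$. By continuity of $g'$ there is $\eta>0$ with $g'\ge g'(R_1)/2$ on $[R_1,R_1+\eta]$. Integrating gives $g(r)\ge \tfrac12 g'(R_1)(r-R_1)$ there. On $[R_1+\eta,R_2]$ the function $g$ is continuous and positive, with minimum $m_\eta>0$, so $g\ge m_\eta(R_2-R_1)^{-1}(r-R_1)$. Taking $c_1$ to be the minimum of the two constants finishes this case.

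The main point needing care is the behaviour near $R_2$. There $\Psi$ blows up, so one cannot phrase everything in terms of $\Psi$; the argument should instead use that $g(R_2)=q(R_2)>0$, which makes the compactness step valid up to the endpoint. The second-order computation in case (a), namely that $\Psi''(x_0)>0$ follows from $N'>0$, is the only genuinely quantitative input.
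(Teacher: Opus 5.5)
Your proof is correct and follows essentially the same route as the paper: you factor $g=A(\Psi-\lambda_*)$, obtain $g''(x_0)=A(x_0)\Psi''(x_0)>0$ from the logarithmic-derivative structure (the paper writes the same computation as $\Psi''/\Psi=L'+L^2$ with $L=\Psi'/\Psi$), respectively $g'(R_1)=A(R_1)\Psi'(R_1)>0$, and then conclude by compactness, which the paper phrases as continuous positive extension of the quotients $g/|r-a|^2$ and $g/(r-R_1)$. Your explicit check that $g(R_2)=q(R_2)>0$, despite $A(R_2)=0$ and $\Psi\to+\infty$ there, fills in a point that the paper's one-line ``$g$ vanishes only at $a$'' leaves implicit.
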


\begin{proof}
Since $g=A(\Psi-\lambda_*)$, the function $g$ is nonnegative and vanishes only at $a$.  In the first regime, $\Psi'(a)=0$.  The critical point is nondegenerate.  To see this, set
\[
L(r):=\frac{\Psi'(r)}{\Psi(r)}.
\]
The numerator of $L$ has a simple zero at $a$ and strictly positive derivative, so $L'(a)>0$.  Since
\[
\frac{\Psi''}{\Psi}=L'+L^2,
\]
we obtain $\Psi''(a)>0$ and hence $g''(a)=A(a)\Psi''(a)>0$.  The quotient $g(r)/|r-a|^2$ therefore extends continuously to a positive value at $a$ and is positive on the rest of the compact interval.

In the strict boundary regime, $\Psi'(R_1)>0$, and
\[
g'(R_1)=A(R_1)\Psi'(R_1)>0.
\]
The quotient $g(r)/(r-R_1)$ extends continuously and positively at $R_1$, which proves the linear bound.
\end{proof}

The following standard representation converts concentration of derivative measures into $L^1$ control of the profile.

\begin{lemma}[Cumulative representation and transport identities]\label{lem:cumulative}
Let $u\in\ABV$, write $Du=\mu-\nu$, and choose the right-continuous representative.  Then, for almost every $r\in(R_1,R_2)$,
\[
u(r)=\mu([R_1,r])-\nu([R_1,r]).
\]
If
\[
V:=\mu([R_1,R_2])=\nu([R_1,R_2]),
\]
then
\[
\int_{R_1}^{R_2}
\left|
\mu([R_1,r])-V\one_{(a,R_2)}(r)
\right|dr
=
\int|s-a|\,d\mu(s),
\]
and
\[
\int_{R_1}^{R_2}\nu([R_1,r])\,dr
=
\int(R_2-s)\,d\nu(s).
\]
\end{lemma}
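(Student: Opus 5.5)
The plan is to work with the one-dimensional $BV$ structure directly: a $BV$ function on $\R$ has a right-continuous good representative determined by its derivative measure. Since $u$ vanishes a.e. on $(-\infty,R_1)$, the right-continuous representative satisfies
\[
u(r)=Du((-\infty,r])=Du([R_1,r])
\]
for every $r$. This uses that $Du$ is supported in $[R_1,R_2]$, which holds because $u$ is zero a.e. outside $[R_1,R_2]$, so $Du$ vanishes on the open complement. Substituting $Du=\mu-\nu$ gives the first identity. Because the Jordan decomposition may put atoms at $R_1$, the closed interval $[R_1,r]$ is the correct choice. The equality a.e. with the original class is then immediate.

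For the first transport identity, the plan is to write both cumulative functions as integrals of indicators:
\[
\mu([R_1,r])=\int\one_{\{s\le r\}}\,d\mu(s),
\qquad
V\one_{(a,R_2)}(r)=\int\one_{\{a<r\}}\,d\mu(s),
\]
using $V=\mu([R_1,R_2])$. Their difference is $\int(\one_{\{s\le r\}}-\one_{\{a<r\}})\,d\mu(s)$. For a fixed $r$, the integrand is $+1$ when $s\le r\le a$, $-1$ when $a<r<s$, and $0$ otherwise. Hence for $r\le a$ the difference equals $\mu([R_1,r])\ge0$, and for $r>a$ it equals $-\mu((r,R_2])\le0$. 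So the integrand has constant sign on each side of $a$, and the absolute value can be removed piecewise:
\[
\int_{R_1}^{R_2}|\cdots|\,dr=\int_{R_1}^{a}\mu([R_1,r])\,dr+\int_a^{R_2}\mu((r,R_2])\,dr.
\]
Tonelli then gives $\int_{[R_1,a]}(a-s)\,d\mu(s)+\int_{(a,R_2]}(s-a)\,d\mu(s)=\int|s-a|\,d\mu$. Endpoint conventions only affect Lebesgue-null sets of $r$, or terms with a vanishing factor $(a-s)$ at $s=a$.

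The second identity also follows from Tonelli:
\[
\int_{R_1}^{R_2}\nu([R_1,r])\,dr=\int\int_{R_1}^{R_2}\one_{\{s\le r\}}\,dr\,d\nu(s)=\int(R_2-s)\,d\nu(s).
\]
The equality $V=\mu([R_1,R_2])=\nu([R_1,R_2])$ is Lemma~\ref{lem:normalization}. It is only needed to interpret the box comparison; the first transport identity as written uses $V$ defined via $\mu$.

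The main point requiring care is the sign analysis that lets us drop the absolute value. It is what makes the identity exact rather than merely an inequality, so it should be checked carefully, in particular the case $r=a$ and possible atoms of $\mu$ at $a$ or $R_1$.
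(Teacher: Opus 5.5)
Your proposal is correct and follows essentially the same route as the paper. The first identity comes from the one-dimensional $BV$ representation $u(r)=Du((-\infty,r])$ of the zero extension, and both transport identities come from splitting the $r$-integral at $a$ (the integrand equals $\mu([R_1,r])\geq 0$ for $r\leq a$ and $-\mu((r,R_2])\leq 0$ for $r>a$) and then applying Tonelli, which is exactly the paper's ``atoms to the left/right of $r$'' argument written out in detail.
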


\begin{proof}
The first formula is the fundamental theorem for one-dimensional $BV$ functions applied to the zero extension.  
The last formula follows as 
\[
\begin{aligned}
\int_{R_1}^{R_2}\nu([R_1,r])\,dr
&=
\int_{R_1}^{R_2}\int_{[R_1,R_2]}
\mathbf{1}_{\{s\le r\}}\,d\nu(s)\,dr\\
&=
\int_{[R_1,R_2]}
\left(\int_s^{R_2}dr\right)d\nu(s)
=
\int_{[R_1,R_2]}(R_2-s)\,d\nu(s).
\end{aligned}
\]
Similarly, for the second formula, split the $r$-integral at $a$.  Before $a$, the cumulative mass $\mu([R_1,r])$ counts atoms lying to the left of $r$; after $a$, the difference $V-\mu([R_1,r])$ counts atoms lying to the right.  Tonelli's theorem then gives exactly the total transport cost $\int|s-a|\,d\mu(s)$.  
\end{proof}

\begin{theorem}[Sharp quantitative stability and metric control]\label{thm:stability}
There is a constant $C$, depending only on $M_1,M_2,R_1,R_2$, such that the following holds whenever $u\in\ABV$ and the deficit satisfies $\delta[u]\leq1$.

If $a=x_0$, then
\[
\int|r-a|^2\,d(Du)^+(r)
\leq C\delta[u],
\]
\[
\int(R_2-r)\,d(Du)^-(r)
\leq C\delta[u],
\]
and
\[
\|u-u_*\|_{L^1}
\leq C\sqrt{\delta[u]}.
\]
Moreover,
\[
\|P_u-P_*\|_{L^\infty}
+
\|m_u-m_*\|_{L^\infty}
+
\|F_u-F_*\|_{L^\infty}
+
\|F_u^{-1}-F_*^{-1}\|_{L^\infty}
\leq C\sqrt{\delta[u]}.
\]

If $a=R_1>x_0$, then
\[
\int(r-R_1)\,d(Du)^+(r)
\leq C\delta[u],
\]
\[
\int(R_2-r)\,d(Du)^-(r)
\leq C\delta[u],
\]
and all the preceding $L^1$ and metric estimates hold with $C\delta[u]$ in place of $C\sqrt{\delta[u]}$.
\end{theorem}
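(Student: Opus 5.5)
The plan is to read off everything from the exact deficit decomposition of Proposition~\ref{prop:deficit}. Each of its three terms is nonnegative, so each is bounded by $\delta[u]$. The third term gives $\lambda_*\int A\,d\nu\le\delta[u]$. On $[R_1,R_2]$ we have $A(r)=(R_2^3-r^3)/3\ge R_1^2(R_2-r)$, which yields the bound on $\int(R_2-r)\,d\nu$. The second term, combined with Lemma~\ref{lem:coercivity}, gives $\int|r-a|^2\,d\mu\le c_2^{-1}\delta[u]$ in the interior regime and $\int(r-R_1)\,d\mu\le c_1^{-1}\delta[u]$ in the boundary regime. I also need a uniform bound on $V=\mu([R_1,R_2])$. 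Since $W_u\ge R_1^3$ and $\JBV[u]\le\lambda_*+1$, we get $V\le(\lambda_*+1)R_1^{-3}$.

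\textbf{$L^1$ estimate.} I would use Lemma~\ref{lem:cumulative} and write, for a.e.\ $r$,
\[
u-u_*=\bigl(\mu([R_1,r])-V\one_{(a,R_2)}\bigr)-\nu([R_1,r])+(V-H_*)\one_{(a,R_2)} .
\]
The first piece has $L^1$ norm $\int|s-a|\,d\mu$. In the interior regime, Cauchy--Schwarz bounds this by $\sqrt{V}\bigl(\int|s-a|^2d\mu\bigr)^{1/2}\le C\sqrt{\delta}$. In the boundary regime it is exactly $\int(s-R_1)\,d\mu\le C\delta$. The second piece has norm $\int(R_2-s)\,d\nu\le C\delta$. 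For the mass defect $V-H_*$, the normalization identity gives $1=\int A\,d\mu-\int A\,d\nu$, while $H_*A(a)=1$. Hence
\[
|V-H_*|A(a)\le\int|A(s)-A(a)|\,d\mu+\int A\,d\nu\le R_2^2\int|s-a|\,d\mu+C\delta ,
\]
and this is of the same order as the first piece. Combining the three pieces gives $\|u-u_*\|_{L^1}\le C\sqrt\delta$, respectively $C\delta$.

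\textbf{Metric control.} Lemma~\ref{lem:weight-cont} gives $\|P_u-P_*\|_\infty\le R_2^2\|u-u_*\|_{L^1}$. From this, $m_u-m_*=\Delta M(P_u-P_*)$ and $F_u-F_*=-2(m_u-m_*)/r$ follow at once. Since $F_u,F_*\ge\kappa$ on $[R_1,R_2]$, we have $|F_u^{-1}-F_*^{-1}|\le\kappa^{-2}|F_u-F_*|$. Outside $[R_1,R_2]$ both metrics coincide with the Schwarzschild metrics, so the sup norms are over $[R_1,R_2]$ only. The hypothesis $\delta\le1$ is used only to make $V$, and hence all constants, uniform; it also lets me absorb $\delta\le\sqrt\delta$ in the interior regime.

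\textbf{Main obstacle.} The only delicate point is to control $V-H_*$ at the right rate. If one simply used the total mass, the bound would not close. Going through the normalization identity ties $V-H_*$ to the transport cost $\int|s-a|\,d\mu$, so it inherits exactly the square-root or linear rate. I expect the rest to be routine bookkeeping with the constants.
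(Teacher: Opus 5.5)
Your proposal is correct and follows the paper's proof essentially step by step. Both arguments take the moment bounds from the deficit decomposition and Lemma~\ref{lem:coercivity}, bound $V$ by $(\lambda_*+1)R_1^{-3}$, and control $V-H_*$ through the normalization identity at the rate of $\int|s-a|\,d\mu$. They then split $u-u_*$ into the three cumulative pieces of Lemma~\ref{lem:cumulative} and transfer the resulting $L^1$ bound to $P_u$, $m_u$, $F_u$, and $F_u^{-1}$ using Lemma~\ref{lem:weight-cont} and the lower bound $\kappa$.
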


\begin{proof}
Write $\mu=(Du)^+$, $\nu=(Du)^-$, and
\[
V:=\mu([R_1,R_2])=\nu([R_1,R_2]).
\]
Proposition~\ref{prop:deficit} and Lemma~\ref{lem:coercivity} immediately give the stated moment estimate for $\mu$.  Since
\[
A(r)=\int_r^{R_2}s^2\,ds
\geq R_1^2(R_2-r),
\]
the third term of the deficit decomposition gives
\[
\int(R_2-r)\,d\nu(r)
\leq
\frac{\delta[u]}{\lambda_*R_1^2}.
\]

We next control the total jump height $V$.  Since $W_u\geq R_1^3$ and $\JBV[u]\leq\lambda_*+1$,
\[
V\leq\frac{\lambda_*+1}{R_1^3}.
\]
The normalization identity gives
\[
A(a)V-1
=
\int(A(a)-A(r))\,d\mu(r)
+
\int A(r)\,d\nu(r).
\]
Because $|A'(r)|\leq R_2^2$,
\[
A(a)|V-H_*|
\leq
R_2^2\int|r-a|\,d\mu(r)
+
\int A\,d\nu.
\]
In the quadratic regime, Cauchy--Schwarz and the uniform bound for $V$ imply
\[
\int|r-a|\,d\mu(r)
\leq
V^{1/2}
\left(\int|r-a|^2\,d\mu(r)\right)^{1/2}
\leq C\sqrt{\delta[u]}.
\]
Hence $|V-H_*|\leq C\sqrt{\delta[u]}$.  In the strict boundary regime, the first-moment estimate directly gives $|V-H_*|\leq C\delta[u]$.

Lemma~\ref{lem:cumulative} and the triangle inequality now yield
\[
\begin{aligned}
\|u-u_*\|_{L^1}
\leq{}&
\int|r-a|\,d\mu(r)
+(R_2-a)|V-H_*|\\
&+
\int(R_2-r)\,d\nu(r).
\end{aligned}
\]
Substitution gives the two claimed rates.

Finally,
\[
\|P_u-P_*\|_{L^\infty}
\leq R_2^2\|u-u_*\|_{L^1},
\]
\[
\|m_u-m_*\|_{L^\infty}
\leq\Delta M\|P_u-P_*\|_{L^\infty},
\]
and
\[
\|F_u-F_*\|_{L^\infty}
\leq\frac{2}{R_1}\|m_u-m_*\|_{L^\infty}.
\]
Since both $F_u$ and $F_*$ are bounded below by $\kappa$,
\[
\|F_u^{-1}-F_*^{-1}\|_{L^\infty}
\leq\kappa^{-2}\|F_u-F_*\|_{L^\infty}.
\]
This transfers the profile estimates to the spacetime metric coefficients.
\end{proof}

\begin{proposition}[Optimality of the stability exponents]\label{prop:optimal-rate}
The square-root exponent in the critical regime and the linear exponent in the strict boundary regime are optimal.  They cannot be replaced by larger powers of the deficit, either in the $BV$ class or uniformly in the smooth class.
\end{proposition}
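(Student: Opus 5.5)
The approach is to construct explicit one-parameter families $u_\varepsilon\in\ABV$ whose deficit $\delta[u_\varepsilon]$ and distance $\|u_\varepsilon-u_*\|_{L^1}$ can be computed to leading order. In each regime the families will saturate the rate of Theorem~\ref{thm:stability}, which shows that no larger power of the deficit can hold. We then transfer the examples to $\Ainf$ using the recovery sequences of Theorem~\ref{thm:relax} and Lemma~\ref{lem:recovery}. Along these sequences $\I[u_n]\to\JBV[u_\varepsilon]$ and $u_n\to u_\varepsilon$ in $L^1$, so a smooth profile with deficit at most $2\delta[u_\varepsilon]$ and $L^1$ distance at least half that of $u_\varepsilon$ exists. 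This shows that the rates also fail to improve uniformly on the smooth class.

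In the critical regime $a=x_0$, take the box shifted to the right: $u_\varepsilon:=H_\varepsilon\one_{(a+\varepsilon,R_2)}$ with $H_\varepsilon=1/A(a+\varepsilon)$. This is admissible for small $\varepsilon>0$ even when $R_1=x_0$. Here $\mu=H_\varepsilon\delta_{a+\varepsilon}$ and $\nu=H_\varepsilon\delta_{R_2}$. In the decomposition of Proposition~\ref{prop:deficit}, the first term vanishes because $P_u(a+\varepsilon)=0$ at the jump point (take $P_u$ continuous, with the jump placed where no mass has yet accumulated). The third term vanishes because $A(R_2)=0$. The deficit is therefore $H_\varepsilon g(a+\varepsilon)=\Psi(a+\varepsilon)-\Psi(a)=\tfrac12\Psi''(a)\varepsilon^2+O(\varepsilon^3)$, using the nondegeneracy shown in Lemma~\ref{lem:coercivity}. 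Meanwhile $\|u_\varepsilon-u_*\|_{L^1}\ge H_*\varepsilon-(R_2-a)|H_\varepsilon-H_*|$, but we compute directly instead: on $(a,a+\varepsilon)$ the difference is $H_*$, and on $(a+\varepsilon,R_2)$ it is $|H_\varepsilon-H_*|=O(\varepsilon)$ with a positive sign. The total is therefore of order $c\varepsilon$ with $c\ge H_*>0$, so $\|u_\varepsilon-u_*\|_{L^1}\gtrsim\sqrt{\delta}$.

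For the metric, we show that $\|F_{u_\varepsilon}-F_*\|_{L^\infty}\gtrsim\varepsilon$. Evaluate at $r=a+\varepsilon$: $P_*(a+\varepsilon)=H_*A'$-increment $\approx H_*a^2\varepsilon$, while $P_{u_\varepsilon}(a+\varepsilon)=0$. Hence $|F_{u_\varepsilon}-F_*|\ge 2\Delta M H_*a^2\varepsilon/(a+\varepsilon)$, and the same argument applies to $F^{-1}$.

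In the strict boundary regime $a=R_1>x_0$, use the same family $u_\varepsilon=H_\varepsilon\one_{(R_1+\varepsilon,R_2)}$. Now $\Psi'(R_1)>0$ gives $\delta=\Psi(R_1+\varepsilon)-\Psi(R_1)=\Psi'(R_1)\varepsilon+O(\varepsilon^2)$, while the $L^1$ and metric distances are again $\gtrsim\varepsilon$. This gives distance $\gtrsim\delta$, so the linear rate cannot improve to $\delta^{p}$ with $p>1$.

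The main technical point is the smooth-class transfer. We must keep the lower bound on the distance in the limit, which follows from $L^1$ convergence and the uniform convergence of $F$ in Lemma~\ref{lem:weight-cont}. We must also ensure the deficit stays below $1$ and comparable; this holds since $\I[u_n]\to\JBV[u_\varepsilon]$ exactly. The remaining steps are routine Taylor expansions.
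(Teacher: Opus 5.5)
Your proposal is correct and follows essentially the same route as the paper. Both use the normalized boxes $H_x\one_{(x,R_2)}$ with $\JBV=\Psi(x)$, expand $\Psi$ at $a$ (quadratically when $\Psi'(a)=0$, linearly when $\Psi'(R_1)>0$), and transfer to $\Ainf$ via the recovery sequences of Theorem~\ref{thm:relax}. Your explicit lower bound on $\|F_{u_\varepsilon}-F_*\|_{L^\infty}$ at $r=a+\varepsilon$ is a small addition that the paper leaves implicit.
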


\begin{proof}
For $x\in[R_1,R_2)$, define the normalized box profile
\[
u_x(r):=H_x\one_{(x,R_2)}(r),
\qquad
H_x:=\frac1{A(x)}.
\]
Then
\[
\JBV[u_x]=\Psi(x)
\]
and a direct comparison of the two boxes gives asymptotic equivalence
\[
\|u_x-u_*\|_{L^1}\asymp|x-a|
\]
as $x\longrightarrow a$.  If $a=x_0$, then $\Psi'(a)=0$ and $\Psi''(a)>0$, so
\[
\Psi(x)-\Psi(a)\asymp|x-a|^2.
\]
Hence $\|u_x-u_*\|_{L^1}\asymp\sqrt{\delta[u_x]}$.  If $a=R_1>x_0$, then $\Psi'(R_1)>0$, and
\[
\Psi(R_1+h)-\Psi(R_1)\asymp h.
\]
Thus $\|u_{R_1+h}-u_*\|_{L^1}\asymp\delta[u_{R_1+h}]$.  The exact relaxation theorem supplies smooth recovery sequences for these nearby boxes, so a stronger uniform exponent cannot hold in the smooth class either. {In detail, fix $x\neq a$ and set
$\varepsilon:=|x-a|^3$. By Theorem~\ref{thm:relax} there is $v\in\Ainf$ with
$\|v-u_x\|_{L^1}\le\varepsilon$ and $|\I[v]-\mathcal J[u_x]|\le\varepsilon$.
Since $\|u_x-u_*\|_{L^1}\asymp|x-a|$ and
$\mathcal J[u_x]-\lambda_*\asymp|x-a|^2$ in the critical regime
(resp.\ $\asymp|x-a|$ in the strict boundary regime), the error $\varepsilon$
is negligible against both scales, so
$\|v-u_*\|_{L^1}\asymp\|u_x-u_*\|_{L^1}$ and
$\I[v]-\lambda_*\asymp\mathcal J[u_x]-\lambda_*$. Letting $x\to a$ gives smooth
profiles realizing the two rates.}
\end{proof}

\begin{theorem}[Convergence of every smooth minimizing sequence]\label{thm:minseq}
Let $u_n\in\Ainf$ satisfy
\[
\I[u_n]\longrightarrow\lambda_*.
\]
Then
\[
u_n\longrightarrow u_*
\quad\text{in }L^1(\R),
\]
\[
(u_n')_+\,dr\weakstar H_*\delta_a,
\qquad
(u_n')_-\,dr\weakstar H_*\delta_{R_2},
\]
and
\[
|Du_n|(\R)\longrightarrow|Du_*|(\R)=2H_*.
\]
Thus the convergence is strict in $BV$ and the limiting profile is independent of the chosen minimizing sequence.
\end{theorem}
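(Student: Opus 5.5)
The plan is to reduce everything to the stability theorem, using the fact that for smooth profiles the relaxed and smooth functionals coincide. Since $u_n\in\Ainf\subset\ABV$ and $\JBV[u_n]=\I[u_n]$, we have $\delta[u_n]=\I[u_n]-\lambda_*\to0$. For $n$ large, $\delta[u_n]\le1$, so Theorem~\ref{thm:stability} applies in whichever regime holds. This immediately gives $\|u_n-u_*\|_{L^1}\le C\sqrt{\delta[u_n]}$ (or $C\delta[u_n]$), and hence $L^1$ convergence. The limit is independent of the sequence because $u_*$ is the unique relaxed minimizer (Theorem~\ref{thm:sharp}).

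For the derivative measures, write $\mu_n=(Du_n)^+=(u_n')_+\,dr$ and $\nu_n=(u_n')_-\,dr$, and let $V_n=\mu_n([R_1,R_2])=\nu_n([R_1,R_2])$ by Lemma~\ref{lem:normalization}. The proof of Theorem~\ref{thm:stability} already yields $|V_n-H_*|\le C\sqrt{\delta[u_n]}\to0$. For a test function $\varphi\in C_c(\R)$, I would write
\[
\int\varphi\,d\mu_n-H_*\varphi(a)=\int(\varphi-\varphi(a))\,d\mu_n+(V_n-H_*)\varphi(a).
\]
The second term vanishes in the limit. For the first, the moment bound $\int|r-a|^2\,d\mu_n\le C\delta[u_n]$ (or the first-moment bound in the boundary regime) gives $\mu_n(\{|r-a|>\eta\})\to0$ for every $\eta>0$. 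Uniform continuity of $\varphi$ together with the bound on $V_n$ then finishes the argument. The same reasoning applies to $\nu_n$, using $\int(R_2-r)\,d\nu_n\le C\delta[u_n]$, which forces concentration at $R_2$. All measures are supported in the fixed compact set $[R_1,R_2]$, so testing against $C_c$ functions suffices for weak-* convergence.

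Strict convergence then follows from $|Du_n|(\R)=\mu_n(\R)+\nu_n(\R)=2V_n\to2H_*$. We also have $|Du_*|(\R)=2H_*$, since $Du_*=H_*\delta_a-H_*\delta_{R_2}$ and $a<R_2$. Combined with the $L^1$ convergence, this is exactly strict $BV$ convergence.

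I expect no real obstacle here, since everything is a corollary of the deficit decomposition. The only delicate point is the mass convergence $V_n\to H_*$, which is needed so that the weak-* limits carry the correct weight rather than merely being supported at the right points. That step is already contained in the stability proof via the normalization identity.
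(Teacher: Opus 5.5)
Your proposal is correct, but it reaches the $L^1$ convergence by a different route from the paper.

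The paper argues qualitatively, by the direct method:
\begin{itemize}
\item $W_{u_n}\ge R_1^3$ and the normalization give a uniform $BV$ bound, and compactness extracts $L^1$-convergent subsequences.
\item Lemma~\ref{lem:lsc} shows that every subsequential limit $u$ satisfies $\JBV[u]\le\lambda_*$.
\item Uniqueness in Theorem~\ref{thm:sharp} forces $u=u_*$, and a subsequence argument upgrades this to the whole sequence.
\item For the derivative measures it uses only that $g$ is continuous with its unique zero at $a$ (so $\min_{|r-a|\ge\eta}g>0$) and that $A(R_2-\eta)>0$.
\item It recovers the masses from the normalization identity $\int A\,d\mu_n=1+\int A\,d\nu_n\to1$.
\end{itemize}

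You instead note that $\Ainf\subset\ABV$ with $\JBV=\I$ there, so Theorem~\ref{thm:stability} applies once $\delta[u_n]\le1$. From it you read off $L^1$ convergence, $|V_n-H_*|\to0$, and concentration from the moment bounds via Chebyshev. The bound $|V_n-H_*|\to0$ comes from an estimate displayed inside the proof of Theorem~\ref{thm:stability}, not from its statement. There is no circularity, since Theorem~\ref{thm:stability} does not use Theorem~\ref{thm:minseq}.

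What each route buys:
\begin{itemize}
\item Your route needs no compactness, no lower semicontinuity, and no subsequence argument. It also gives explicit rates ($\sqrt{\delta[u_n]}$ or $\delta[u_n]$) for both the profile and the jump height.
\item The cost is that you rely on the coercivity Lemma~\ref{lem:coercivity}, that is, on the nondegeneracy conditions $\Psi''(x_0)>0$ and $\Psi'(R_1)>0$.
\item The paper's argument uses only uniqueness of the relaxed minimizer and strict positivity of $g$ away from $a$. It would therefore still work where the minimizer is unique but the localization gap is degenerate.
\end{itemize}
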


\begin{proof}
{For clarity, we separate the arguments into compactness, identification of the $L^1$ limit, and convergence of the derivative measures.}

{First, the energy bound controls total variation.  Since $W_{u_n}\geq R_1^3$,}
\[
(Du_n)^+(\R)
=\int_{R_1}^{R_2}(u_n')_+\,dr
\leq
\frac{\I[u_n]}{R_1^3}.
\]
{Because each $u_n$ is smooth and vanishes outside $(R_1,R_2)$, its total derivative on $\R$ is zero.  Hence the total upward and downward variations are equal:}
\[
(Du_n)^+(\R)=(Du_n)^-(\R).
\]
{The normalization also gives an $L^1$ bound, since}
\[
\|u_n\|_{L^1}
\leq
\frac1{R_1^2}
\int_{R_1}^{R_2}r^2u_n(r)\,dr
=
\frac1{R_1^2}.
\]
{Thus $(u_n)$ is bounded in $BV([R_1,R_2])$.  The standard compactness theorem for $BV$ functions \cite{AmbrosioFuscoPallara2000,EvansGariepy2015} implies that every subsequence has a further subsequence converging in $L^1$ to some $u\in BV$.  Nonnegativity, support, and the weighted normalization pass to the limit, so $u\in\ABV$.}

{For such a convergent subsequence, Lemma~\ref{lem:lsc} yields}
\[
\JBV[u]
\leq
\liminf_{n\longrightarrow\infty}\I[u_n]
=
\lambda_*.
\]
{Since $\lambda_*$ is the minimum and Theorem~\ref{thm:sharp} gives a unique minimizer, $u=u_*$.  Every subsequence therefore has a further subsequence converging to $u_*$.  A standard contradiction argument now shows that the whole sequence converges: if not, some subsequence would remain a fixed positive $L^1$ distance from $u_*$, but that subsequence would itself have a further subsequence converging to $u_*$.  Hence}
\[
u_n\longrightarrow u_*
\quad\text{in }L^1(\R).
\]

{We next identify the positive and negative derivative measures.  Put}
\[
\mu_n=(u_n')_+\,dr,
\qquad
\nu_n=(u_n')_-\,dr.
\]
{The deficit decomposition consists of nonnegative terms.  In particular, for every fixed $\eta>0$,}
\[
\mu_n(\{|r-a|\geq\eta\})
\leq
\frac{\delta[u_n]}{\min_{|r-a|\geq\eta}g(r)}
\longrightarrow0,
\]
{because $g$ is continuous and vanishes only at $a$.  Likewise,}
\[
\nu_n([R_1,R_2-\eta])
\leq
\frac{\delta[u_n]}{\lambda_*A(R_2-\eta)}
\longrightarrow0.
\]
{Thus every weak-* cluster point of $\mu_n$ is supported at $a$, and every cluster point of $\nu_n$ is supported at $R_2$.  It remains to determine their masses.  The normalization identity gives}
\[
\int A\,d\mu_n
=
1+\int A\,d\nu_n.
\]
{The last integral tends to zero.  Indeed, split it at $R_2-\eta$: the first part is bounded by $A(R_1)\nu_n([R_1,R_2-\eta])$, while on the remaining interval $A\leq A(R_2-\eta)$; first let $n\longrightarrow\infty$ and then $\eta\downarrow0$.  Hence $\int A\,d\mu_n\longrightarrow1$.  Since $\mu_n$ concentrates at $a$ and $A$ is continuous,}
\[
A(a)\mu_n(\R)\longrightarrow1,
\qquad
\mu_n(\R)\longrightarrow H_*.
\]
{The equality of upward and downward total variations gives $\nu_n(\R)\longrightarrow H_*$.  Therefore}
\[
\mu_n\weakstar H_*\delta_a,
\qquad
\nu_n\weakstar H_*\delta_{R_2}.
\]
{Finally,}
\[
|Du_n|(\R)
=
\mu_n(\R)+\nu_n(\R)
\longrightarrow2H_*
=
|Du_*|(\R).
\]
{Together with $L^1$ convergence, this is precisely strict convergence in $BV$.}
\end{proof}

\section{Variational phase transition and thin-annulus asymptotics}\label{sec:phase}

{The asymptotic expansions below are elementary consequences of the explicit optimizer.  Their role is interpretive: they explain where the optimal transition layer forms when its location is unconstrained and how the NEC cost blows up when a fixed mass increase is compressed into a thin annulus.}

The explicit optimizer permits a concise description of how the sharp problem changes when the annulus parameters vary.  This is useful physically because it distinguishes a freely selected interior layer from a layer forced against the inner boundary.

\begin{corollary}
    [Interior and boundary phases]\label{thm:phase}
Fix $M_1,M_2,R_2$ and allow the inner endpoint $R_1$ to vary in $(2M_2,R_2)$.  We write $\Ainf(R_1,R_2)$ for the smooth admissible class with these endpoints.  Let
\[
\Lambda(R_1)
:=
\inf_{u\in\Ainf(R_1,R_2)}\I[u].
\]
Then
\[
{
\Lambda(R_1)=\Psi(\max\{R_1,x_0\}).
}
\]
If $x_0>2M_2$, this gives two phases:
\[
\Lambda(R_1)
=
\begin{cases}
\Psi(x_0),&2M_2<R_1\leq x_0,\\
\Psi(R_1),&x_0\leq R_1<R_2.
\end{cases}
\]
Thus the optimal rising radius is the fixed interior point $x_0$ in the first phase and is pinned to $R_1$ in the second.  The function $\Lambda$ is $C^1$ at $R_1=x_0$ but not $C^2$ there: see Figure~\ref{fig:phase}.  The same point separates the square-root stability regime from the linear stability regime.  If $x_0\leq2M_2$, the admissible range contains only the boundary phase.
\end{corollary}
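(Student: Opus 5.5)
The plan is to apply Theorem~\ref{thm:sharp} for each fixed inner endpoint, and then to read off the phase structure from the monotonicity of $\Psi$ established in Lemma~\ref{lem:radius}. The key observation is that neither the function $\Psi=q/A$ nor the critical point $x_0$ depends on $R_1$. Indeed, $q$ involves only $M_1$, and $A(r)=(R_2^3-r^3)/3$ involves only $R_2$. Likewise, $x_0$ is the root of $M_1x^3+3R_2^3x-7M_1R_2^3=0$. Hence, as $R_1$ ranges over $(2M_2,R_2)$, all hypotheses of Sections~\ref{sec:sharp}--\ref{sec:stability} remain in force, in particular $R_2>R_1>2M_2$. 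Theorem~\ref{thm:sharp}, applied to $\Ainf(R_1,R_2)$, then gives
\[
\Lambda(R_1)=\lambda_*(R_1)=\Psi(a(R_1)),\qquad a(R_1)=\max\{R_1,x_0\},
\]
which is the first displayed identity.

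Next I split into cases. If $x_0>2M_2$, then $x_0$ lies in $(2M_2,R_2)$, because $x_0<R_2$ by Lemma~\ref{lem:radius}. For $R_1\le x_0$ the maximum equals $x_0$, and for $R_1\ge x_0$ it equals $R_1$; the two formulas agree at $R_1=x_0$. This gives the two-phase formula. The statement about the optimal rising radius is exactly the support of $(Du_*)^+=H_*\delta_a$ identified in Theorem~\ref{thm:sharp}, together with Theorem~\ref{thm:minseq} for minimizing sequences. If $x_0\le 2M_2$, then every admissible $R_1$ satisfies $R_1>x_0$, so $a=R_1$ throughout and only the boundary phase occurs. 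The claim that the same point separates the stability regimes is just a rereading of Lemma~\ref{lem:coercivity} and Theorem~\ref{thm:stability}: the square-root regime corresponds to $a=x_0$, that is, $R_1\le x_0$, and the linear regime to $R_1>x_0$.

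The regularity claim at $R_1=x_0$ is the only step needing an actual computation, and it is mild. On the left, $\Lambda$ is constant, so its one-sided first and second derivatives vanish. On the right, $\Lambda=\Psi$, which is smooth near $x_0$, so its right derivative at $x_0$ is $\Psi'(x_0)=0$ because $x_0$ is critical. Hence $\Lambda'$ exists and is continuous at $x_0$, with value $0$, so $\Lambda$ is $C^1$ there. For the failure of $C^2$, I will reuse the nondegeneracy computation from the proof of Lemma~\ref{lem:coercivity}. There $L=\Psi'/\Psi$ has numerator $N$ with $N'>0$, so $L'(x_0)>0$ while $L(x_0)=0$. It follows that $\Psi''(x_0)=\Psi(x_0)\bigl(L'(x_0)+L(x_0)^2\bigr)>0$. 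Thus the right second derivative of $\Lambda$ at $x_0$ is strictly positive while the left one is $0$, so $\Lambda''$ has a jump at $x_0$. The only care needed is to phrase this with one-sided derivatives, since $\Lambda$ is piecewise smooth, and to note that $x_0$ is an interior point of the parameter range precisely when $x_0>2M_2$.
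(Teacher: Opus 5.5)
Your proposal is correct and follows the paper's proof essentially step for step: you apply Theorem~\ref{thm:sharp} at each fixed $R_1$, noting explicitly (which the paper leaves implicit) that $\Psi$ and $x_0$ do not depend on $R_1$, you read off the two phases from the monotonicity in Lemma~\ref{lem:radius}, and you compare one-sided first and second derivatives at $x_0$ using $\Psi'(x_0)=0<\Psi''(x_0)$. Your stability claim is likewise just a rereading of Lemma~\ref{lem:coercivity} and Theorem~\ref{thm:stability}, exactly as in the paper.
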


\begin{proof}
For every fixed $R_1$, Theorem~\ref{thm:sharp} gives
\[
\Lambda(R_1)=\min_{r\in[R_1,R_2)}\Psi(r).
\]
Lemma~\ref{lem:radius} shows that $\Psi$ decreases up to $x_0$ and increases after $x_0$, proving the formula and the two phases.  At the transition, the left derivative of $\Lambda$ is zero because the left branch is constant.  The right derivative is $\Psi'(x_0)=0$, so $\Lambda$ is $C^1$.  The left second derivative is zero, whereas the right second derivative is $\Psi''(x_0)>0$, so $\Lambda$ is not $C^2$.  The stability statement follows from Theorem~\ref{thm:stability}.
\end{proof}

\begin{figure}[t]
\centering
\begin{tikzpicture}[x=1cm,y=1cm,font=\small,>=Latex]
  \draw[->] (0,0) -- (10.5,0) node[right] {$R_1$};
  \draw[->] (0,0) -- (0,5.2) node[above] {$\Lambda(R_1)$};
  \draw[thick] (0.8,1.5) -- (4.4,1.5);
  \draw[thick,domain=4.4:9.5,samples=80] plot (\x,{1.5+0.035*(\x-4.4)^2+0.012*(\x-4.4)^3});
  \draw[dashed] (4.4,0) -- (4.4,4.8);
  \node[below] at (4.4,0) {$x_0$};
  \node[align=center] at (2.3,2.25) {interior phase\\$a=x_0$};
  \node[align=center] at (7.4,3.1) {boundary phase\\$a=R_1$};
  \node[anchor=west] at (4.55,0.75) {$\Lambda\in C^1\setminus C^2$ at $x_0$};
\end{tikzpicture}
\caption{{Qualitative phase diagram for the sharp value $\Lambda(R_1)$.  The cost is constant while the optimal rising radius is the free interior point $x_0$, and increases once the constraint pins the rise to $R_1$.}}
\label{fig:phase}
\end{figure}

{
Figure~\ref{fig:phase} shows the two phases of $\Lambda$. 
}

\begin{theorem}[Thin-annulus divergence]\label{thm:thin}
Fix $M_1<M_2$ and a radius $R>2M_2$.  Set
\[
R_1=R,
\qquad
R_2=R+L,
\]
and let $L\downarrow0$.  For all sufficiently small $L$, the optimizer is pinned at $a=R$.  Moreover,
\[
H_*
=
\frac1{R^2L}+O(1),
\]
\[
\lambda_*
=
\frac{R^{3/2}}{\sqrt{R-2M_1}}\frac1L+O(1),
\]
and the physical minimum satisfies
\[
\frac{\Delta M}{2}\lambda_*
=
\frac{\Delta M R^{3/2}}{2\sqrt{R-2M_1}}\frac1L+O(1).
\]
In particular, a fixed increase of Schwarzschild mass cannot be compressed into a vanishingly thin annulus within this ansatz without an unbounded tangential NEC cost.
\end{theorem}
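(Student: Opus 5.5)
Since $x_0<7M_1/3$ is independent of $L$ and $R>2M_2>2M_1$, we first check whether $R>x_0$ automatically; it need not be, because $2M_2$ may lie below $x_0$. So we argue through the defining cubic. The root $x_0=x_0(R_2)$ of $N(r)=M_1r^3+3R_2^3r-7M_1R_2^3$ depends continuously on $R_2=R+L$. At $L=0$ we have $N(R)=M_1R^3+3R^4-7M_1R^3=3R^3(R-2M_1)>0$, so $x_0(R)<R$. By continuity, $x_0(R+L)<R$ for all small $L$. Lemma~\ref{lem:radius} then gives $a=\max\{R,x_0\}=R$, which is the pinned phase.

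With $a=R$ we have $H_*=1/A(R)$ and $A(R)=((R+L)^3-R^3)/3=R^2L+RL^2+L^3/3=R^2L(1+L/R+L^2/(3R^2))$. Expanding the reciprocal gives
\[
H_*=\frac{1}{R^2L}\Bigl(1-\frac LR+O(L^2)\Bigr)=\frac1{R^2L}-\frac1{R^3}+O(L),
\]
so the remainder is $O(1)$. Next, $\lambda_*=\Psi(R)=q(R)H_*$, where $q(R)=R^{7/2}/\sqrt{R-2M_1}$ is independent of $L$. Multiplying the expansion of $H_*$ by $q(R)$ gives
\[
\lambda_*=\frac{R^{3/2}}{\sqrt{R-2M_1}}\,\frac1L+O(1),
\]
because $q(R)/R^2=R^{3/2}/\sqrt{R-2M_1}$. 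Multiplying by the fixed constant $\Delta M/2$ gives the physical statement. Theorem~\ref{thm:sharp} identifies $\lambda_*$ with the smooth infimum, so the smooth cost diverges like $1/L$.

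The main obstacle is the first step. The pinning claim cannot come from the bound $x_0<7M_1/3$ alone; it needs the sign of $N(R)$ at $R_2=R$ together with continuity of the root in $R_2$. Everything after that is a routine Taylor expansion.
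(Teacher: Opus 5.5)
Your proposal is correct and follows essentially the same route as the paper. You pin $a=R$ from the sign $N(R)=3R^3(R-2M_1)>0$ at $R_2=R$ plus continuity in $R_2$; the paper applies continuity to the value of $N$ at $R$ rather than to the root $x_0(R_2)$, which is equivalent. You then expand $H_*=3/\bigl((R+L)^3-R^3\bigr)$ and $\lambda_*=q(R)H_*$, exactly as the paper does.
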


\begin{proof}
At $R_2=R_1=R$, the numerator controlling the sign of $\Psi'(R)$ is
\[
M_1R^3+3R^4-7M_1R^3
=3R^3(R-2M_1)>0.
\]
By continuity, $\Psi'(R)>0$ for all sufficiently small positive $L$, so the constrained optimizer is $a=R$.  Since
\[
(R+L)^3-R^3
=3R^2L+3RL^2+L^3,
\]
we obtain
\[
H_*
=
\frac{3}{(R+L)^3-R^3}
=
\frac1{R^2L}+O(1).
\]
Substituting $a=R$ into $\lambda_*=\Psi(a)$ gives
\[
\lambda_*
=
\frac{3R^{7/2}}
{\sqrt{R-2M_1}\bigl((R+L)^3-R^3\bigr)}
=
\frac{R^{3/2}}{\sqrt{R-2M_1}}\frac1L+O(1).
\]
Multiplication by $\Delta M/2$ proves the final formula.
\end{proof}

\section{The selected metric and distributional Einstein tensor}\label{sec:einstein}

{We now pass from variational convergence to geometric convergence and compute the distributional Einstein tensor of the uniquely selected metric. The general fact that continuous metrics with a jump in first derivatives may carry hypersurface-supported curvature is classical \cite{GerochTraschen1987,MarsSenovilla1993,LeFlochMardare2007}. In the present paper the new point is not this general mechanism, but that variational near-minimality forces precisely one such metric and allows the distributional limit to be computed directly from every smooth minimizing sequence.}

The limiting cumulative profile is
\[
P_*(r)
=
\begin{cases}
0,&r\leq a,\\
\displaystyle\frac{H_*}{3}(r^3-a^3),&a<r<R_2,\\
1,&r\geq R_2.
\end{cases}
\]
Define
\[
m_*(r):=M_1+\Delta M P_*(r),
\qquad
F_*(r):=1-\frac{2m_*(r)}r,
\]
and
\[
g_*=-F_*dt^2+F_*^{-1}dr^2+r^2d\Omega^2.
\]

\begin{corollary}[Uniform metric convergence]\label{cor:metric}
For every smooth minimizing sequence,
\[
P_{u_n}\longrightarrow P_*,
\qquad
m_{u_n}\longrightarrow m_*,
\qquad
F_{u_n}\longrightarrow F_*,
\qquad
F_{u_n}^{-1}\longrightarrow F_*^{-1}
\]
uniformly on $[R_1,R_2]$.  Moreover, $F_*$ is continuous and locally Lipschitz, so $g_*$ is a continuous locally Lipschitz Lorentzian metric.
\end{corollary}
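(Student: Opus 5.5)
The plan is to reduce everything to the $L^1$ convergence $u_n\to u_*$ from Theorem~\ref{thm:minseq} and then use the Lipschitz dependence of the cumulative quantities on the profile. This is the same chain of estimates used at the end of the proof of Theorem~\ref{thm:stability}.

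\emph{Convergence of the coefficients.} Lemma~\ref{lem:weight-cont} gives
\[
\|P_{u_n}-P_*\|_{L^\infty}\le R_2^2\|u_n-u_*\|_{L^1}\longrightarrow0.
\]
The bound holds on all of $\R$, since both functions equal $0$ to the left of $R_1$ and $1$ to the right of $R_2$. Next, $m_{u_n}-m_*=\Delta M\,(P_{u_n}-P_*)$, and on $[R_1,R_2]$ we have $|F_{u_n}-F_*|\le (2/R_1)|m_{u_n}-m_*|$. This gives uniform convergence of $m_{u_n}$ and $F_{u_n}$. For the inverses, both $F_{u_n}$ and $F_*$ are bounded below by $\kappa=1-2M_2/R_1>0$ on $[R_1,R_2]$, because $0\le P\le1$ implies $m\le M_2$. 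Hence
\[
|F_{u_n}^{-1}-F_*^{-1}|\le\kappa^{-2}|F_{u_n}-F_*|,
\]
which gives uniform convergence of $F_{u_n}^{-1}$. The bound $P_*\in[0,1]$ follows from the explicit formula: on $(a,R_2)$ we have $P_*=(r^3-a^3)/(R_2^3-a^3)\in(0,1)$.

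\emph{Regularity of $F_*$.} By the explicit formula, $P_*$ is continuous. It is $0$ at $a$ from both sides, and its limit at $R_2^-$ is $H_*(R_2^3-a^3)/3=H_*A(a)=1$. The function $P_*$ is piecewise smooth, with a bounded derivative equal to $H_*r^2\one_{(a,R_2)}$, so it is Lipschitz on $\R$. Consequently $m_*$ is Lipschitz, and $F_*=1-2m_*/r$ is locally Lipschitz on $(0,\infty)$; it is smooth Schwarzschild outside $[a,R_2]$. Since $F_*\ge\kappa>0$ on $[R_1,R_2]$, and $F_*$ equals the Schwarzschild value with mass $M_1$ or $M_2$ elsewhere, $F_*$ is positive on $r>2M_2$. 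On the region where the metrics are considered, $F_*^{-1}$ is therefore also locally Lipschitz. It follows that $g_*$ has continuous, locally Lipschitz components and Lorentzian signature, because $-F_*<0$ and $F_*^{-1}>0$.

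No step here is a real obstacle. The only points needing care are two bookkeeping checks: that the continuity of $P_*$ at $R_2$ uses the normalization $H_*A(a)=1$, and that the lower bound $\kappa$ holds uniformly in $n$, which is what allows the inverse estimate.
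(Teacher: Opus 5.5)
Your proposal is correct and follows essentially the same route as the paper: you get $L^1$ convergence from Theorem~\ref{thm:minseq}, combine it with the Lipschitz bound of Lemma~\ref{lem:weight-cont}, control inversion by the uniform lower bound $\kappa$, and read off the Lipschitz regularity of $m_*$ from the explicit piecewise formula for $P_*$. You simply make explicit the bookkeeping the paper leaves implicit, namely the constants $2/R_1$ and $\kappa^{-2}$, the continuity of $P_*$ at $R_2$ via $H_*A(a)=1$, and the positivity of $F_*$ outside the annulus.
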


\begin{proof}
Lemma~\ref{lem:weight-cont} and Theorem~\ref{thm:minseq} give the first three convergences.  Since all coefficients are bounded below by $\kappa$, inversion preserves uniform convergence.  The function $m_*$ is continuous and piecewise $C^1$ with bounded one-sided derivatives, hence $F_*$ is locally Lipschitz.
\end{proof}

In the present symmetry class, the distributional calculation is elementary because the Einstein tensor is linear in $m'$ and $m''$.

Set
\[
\beta:=\Delta M H_*.
\]
Then
\[
m_*'(r)=\beta r^2\one_{(a,R_2)}(r)
\]
as a locally bounded function.  Differentiating as a distribution gives
\[
m_*''
=
2\beta r\one_{(a,R_2)}
+
\beta a^2\delta_a
-
\beta R_2^2\delta_{R_2}.
\]

\begin{definition}[Tensor-distribution convergence in the fixed chart]
All metrics considered here are diagonal in the same coordinates and depend only on $r$.  We say that their mixed Einstein tensors converge as tensor distributions if each mixed component converges in $\mathcal D'$ on every compact coordinate cylinder.  This componentwise definition is coordinate-consistent within the fixed smooth chart.  It is especially convenient here because
\[
\sqrt{-\det g_u}=r^2\sin\theta
\]
is independent of $u$.
\end{definition}

{
Concretely, for a mixed component $T(r)$ and $\varphi\in C_c^\infty(\mathcal U)$
on a coordinate cylinder $\mathcal U$ we use the coordinate pairing
\[
\langle T,\varphi\rangle:=\int T(r)\,\varphi(t,r,\theta,\phi)\,
  dt\,dr\,d\theta\,d\phi .
\]
Because every component depends on $r$ alone, setting
$\tilde\varphi(r):=\int\varphi\,dt\,d\theta\,d\phi\in C_c^\infty$ reduces this
to a one-dimensional pairing $\int T(r)\tilde\varphi(r)\,dr$, and it suffices
to test against $\tilde\varphi$. Note that the reference measure here is the
coordinate $dr$ rather than the invariant volume $dV_g$; the two
normalizations are reconciled in Corollary~\ref{cor:Israel-match}.

\begin{remark}[The Einstein tensor of $g_*$]\label{rem:einstein-of-gstar}
Since $F_*$ is locally Lipschitz and bounded below by $\kappa>0$, the
Christoffel symbols of $g_*$ lie in $L^\infty_{\mathrm{loc}}$; the curvature
is obtained by differentiating them once distributionally, and the
quadratic terms are products of $L^\infty_{\mathrm{loc}}$ functions. No
product of distributions arises, and $g_*$ is a regular metric in the sense
of Geroch--Traschen \cite{GerochTraschen1987}. Moreover, the gauge
$g_{tt}g_{rr}=-1$ makes the mixed components
\[
G^t{}_t=G^r{}_r=\frac{rF'+F-1}{r^2},\qquad
G^\theta{}_\theta=G^\phi{}_\phi=\tfrac12 F''+\frac{F'}{r}
\]
\emph{linear} in $F$, hence in $m$. The formulas of Proposition~\ref{prop:einstein-smooth},
derived there for smooth $m$, therefore extend verbatim to
$m\in W^{1,\infty}_{\mathrm{loc}}$ with all derivatives read in $\mathcal D'$,
and this is the sense in which $G[g_*]$ is understood below.
\end{remark}}

\begin{theorem}[Distributional Einstein convergence]\label{thm:Einstein-convergence}
Let $u_n\in\Ainf$ be any minimizing sequence.  Then
\[
G[g_{u_n}]\longrightarrow G[g_*]
\]
as tensor distributions.  The nonzero mixed components of the limit are
\[
G^t{}_t[g_*]=G^r{}_r[g_*]
=-2\beta\one_{(a,R_2)},
\]
and
\[
G^\theta{}_\theta[g_*]
=G^\phi{}_\phi[g_*]
=-2\beta\one_{(a,R_2)}
-\beta a\delta_a
+\beta R_2\delta_{R_2}.
\]
No derivative of a Dirac measure occurs.
\end{theorem}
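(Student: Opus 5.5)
By Remark~\ref{rem:einstein-of-gstar}, the mixed components of the Einstein tensor in this gauge are linear in $m$:
\[
G^t{}_t=G^r{}_r=-\frac{2m'}{r^2},
\qquad
G^\theta{}_\theta=G^\phi{}_\phi=-\frac{m''}{r},
\]
with derivatives read in $\mathcal D'$. This holds both for the smooth $m_{u_n}$ and for the Lipschitz $m_*$. After integrating out $t,\theta,\phi$, it therefore suffices to prove one-dimensional convergence against test functions $\tilde\varphi\in C_c^\infty(\R)$. Since $1/r$ and $1/r^2$ are smooth near $[R_1,R_2]$ (which stays away from $r=0$), we may absorb them into the test function. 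So the whole claim reduces to two statements: $m_{u_n}'\to m_*'$ and $m_{u_n}''\to m_*''$ in $\mathcal D'$ on a neighborhood of $[R_1,R_2]$. Outside $[R_1,R_2]$ every metric is Schwarzschild, so all components vanish there.

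For the first statement, we have $m_{u_n}'=\Delta M r^2u_n$. Theorem~\ref{thm:minseq} gives $u_n\to u_*$ in $L^1$, so $m_{u_n}'\to \Delta M r^2u_*=\beta r^2\one_{(a,R_2)}=m_*'$ in $L^1$, and hence in $\mathcal D'$. Distributional differentiation is continuous on $\mathcal D'$, so $m_{u_n}''=(m_{u_n}')'\to m_*''$ as well. That is the whole convergence argument; note that uniform convergence of $m_{u_n}$ from Corollary~\ref{cor:metric} would already suffice, since two derivatives are continuous in $\mathcal D'$. One could also check this against the finer information $(u_n')_\pm\,dr\weakstar H_*\delta_{a},H_*\delta_{R_2}$ from Theorem~\ref{thm:minseq}, writing $m_{u_n}''=\Delta M(2ru_n+r^2u_n')$, but it is not needed.

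It remains to compute the limit components. From $m_*'=\beta r^2\one_{(a,R_2)}$ we get $G^t{}_t=G^r{}_r=-2\beta\one_{(a,R_2)}$. Differentiating the product of the smooth function $r^2$ with the $BV$ indicator gives
\[
m_*''=2\beta r\one_{(a,R_2)}+\beta a^2\delta_a-\beta R_2^2\delta_{R_2}.
\]
Multiplying by $-1/r$, and using $\frac1r\delta_{x}=\frac1x\delta_x$ for the smooth factor $1/r$, yields
\[
-2\beta\one_{(a,R_2)}-\beta a\,\delta_a+\beta R_2\,\delta_{R_2}.
\]
The absence of $\delta'$ terms follows because $m_*$ is $W^{1,\infty}$: $m_*'$ is a bounded function, so $m_*''$ is a measure and contains no derivative of a Dirac mass.

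The only delicate point is the justification that the distributional Einstein tensor of $g_*$, defined via the Geroch--Traschen calculus, really equals these linear formulas. The quadratic Christoffel terms could otherwise produce products of distributions. Remark~\ref{rem:einstein-of-gstar} already handles this: it observes that with $g_{tt}g_{rr}=-1$ the quadratic terms cancel identically, which is a pointwise identity for smooth $F$ and passes to Lipschitz $F$ because those terms are $L^\infty$ products that converge a.e. under mollification. I would cite that remark for this step and verify it only briefly.
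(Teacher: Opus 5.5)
Your proposal is correct and follows essentially the same route as the paper: $L^1$ convergence $m_{u_n}'=\Delta M r^2u_n\to\beta r^2\one_{(a,R_2)}$ from Theorem~\ref{thm:minseq}, then continuity of distributional differentiation (the paper writes this as integration by parts against test functions on a neighborhood $I_\varepsilon$ of $[R_1,R_2]$, which also catches the atom when $a=R_1$), then the explicit product-rule computation of $m_*''$. One small imprecision is that boundedness of $m_*'$ alone does not make $m_*''$ a measure (that needs $m_*'\in BV$, which is clear here from the explicit formula), but the absence of $\delta'$ already follows from the continuity of $m_*$, as the paper notes.
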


{
\begin{proof}
Choose $\varepsilon>0$ sufficiently small that $R_1-\varepsilon>0$, and set
\[
I_\varepsilon:=(R_1-\varepsilon,R_2+\varepsilon).
\]
For every $n$, extend the mass function to $I_\varepsilon$ by
\[
m_{u_n}(r)=
\begin{cases}
M_1, & r\leq R_1,\\
M_1+\Delta M\displaystyle\int_{R_1}^{r}s^2u_n(s)\,ds,
   & R_1<r<R_2,\\
M_2, & r\geq R_2.
\end{cases}
\]
Because $u_n\in C_c^\infty((R_1,R_2))$, this extension is smooth on
$I_\varepsilon$. Similarly, extend $m_*$ by the constants $M_1$ and
$M_2$ outside $[R_1,R_2]$. We test all radial distributions against
functions $\phi\in C_c^\infty(I_\varepsilon)$. In particular, such test
functions may be nonzero at $R_1$ and $R_2$, and hence detect possible
Dirac masses supported at the shell radii.

For the $t$ and $r$ components, Proposition~\ref{prop:einstein-smooth} gives
\[
G^t{}_t[g_{u_n}]
=
G^r{}_r[g_{u_n}]
=
-\frac{2m_{u_n}'}{r^2}.
\]
On $I_\varepsilon$ we have
\[
m_{u_n}'=\Delta M r^2u_n,
\]
where $u_n$ is understood to be extended by zero outside
$(R_1,R_2)$. Since $u_n\to u_*$ in $L^1(\mathbb R)$,
\[
m_{u_n}'\longrightarrow m_*'
   =\beta r^2\mathbf 1_{(a,R_2)}
\qquad\text{in }L^1(I_\varepsilon).
\]
Therefore
\[
G^t{}_t[g_{u_n}]
=
G^r{}_r[g_{u_n}]
\longrightarrow
-2\beta\mathbf 1_{(a,R_2)}
\]
in $L^1(I_\varepsilon)$, and hence in
$\mathcal D'(I_\varepsilon)$.

For the angular components,
\[
G^\theta{}_\theta[g_{u_n}]
=
G^\phi{}_\phi[g_{u_n}]
=
-\frac{m_{u_n}''}{r}.
\]
Let $\phi\in C_c^\infty(I_\varepsilon)$. Since
$m_{u_n}'\to m_*'$ in $L^1(I_\varepsilon)$,
\[
\left\langle m_{u_n}'',\phi\right\rangle
=
-\int_{I_\varepsilon}m_{u_n}'(r)\phi'(r)\,dr
\longrightarrow
-\int_{I_\varepsilon}m_*'(r)\phi'(r)\,dr
=
\left\langle m_*'',\phi\right\rangle.
\]
Thus
\[
m_{u_n}''\longrightarrow m_*''
\qquad\text{in }\mathcal D'(I_\varepsilon).
\]
Because multiplication by the smooth function $1/r$ is continuous on
$\mathcal D'(I_\varepsilon)$, it follows that
\[
-\frac{m_{u_n}''}{r}
\longrightarrow
-\frac{m_*''}{r}
\qquad\text{in }\mathcal D'(I_\varepsilon).
\]

Now
\[
m_*'(r)=\beta r^2\mathbf 1_{(a,R_2)}(r).
\]
Taking its distributional derivative on $I_\varepsilon$ gives
\[
m_*''
=
2\beta r\mathbf 1_{(a,R_2)}
+\beta a^2\delta_a
-\beta R_2^2\delta_{R_2}.
\]
This formula remains valid when $a=R_1$, because the test functions are
defined on the larger interval $I_\varepsilon$ and may be nonzero at
$r=R_1$. Consequently,
\[
G^\theta{}_\theta[g_*]
=
G^\phi{}_\phi[g_*]
=
-2\beta\mathbf 1_{(a,R_2)}
-\beta a\delta_a
+\beta R_2\delta_{R_2}.
\]

Hence, on every coordinate cylinder whose radial projection is contained
in $I_\varepsilon$,
\[
G[g_{u_n}]\longrightarrow G[g_*]
\]
componentwise as tensor distributions, with
\[
G^t{}_t[g_*]
=
G^r{}_r[g_*]
=
-2\beta\mathbf 1_{(a,R_2)}
\]
and
\[
G^\theta{}_\theta[g_*]
=
G^\phi{}_\phi[g_*]
=
-2\beta\mathbf 1_{(a,R_2)}
-\beta a\delta_a
+\beta R_2\delta_{R_2}.
\]
No derivative of a Dirac measure occurs, because $m_*$ is continuous.
A $\delta'$ term could arise only if $m_*$ itself had a jump.
\end{proof}

{
\begin{remark}[Why no $\delta'$: the energy bound forbids it]\label{rem:nodeltaprime}
The absence of a $\delta'$ is not a smoothness accident but a consequence of
the cost functional. Since $u_n\to u_*$ only in $L^1$ while
$G^\theta{}_\theta$ involves $m''$, and bounded sets in $L^1$ are not weakly
compact, nothing a priori prevents $u_n$ from concentrating into a Dirac
mass; were that to happen, $m_*$ would jump and a $\delta'$ would appear.
It is excluded by the lower weight bound $W_u\ge R_1^3$ of
\eqref{eq:weightbounds}: finiteness of
$\mathcal J[u]=\int W_u\,d(Du)^+$ bounds the positive variation of $u$, so
$u_*\in\ABV$ is a function and not a measure. The singularity is thereby
pushed down exactly one derivative: $u_*$ may jump and $m_*'$ may jump, but
$m_*$ remains Lipschitz. This is what keeps $g_*$ Geroch--Traschen regular
(Remark~\ref{rem:einstein-of-gstar}) and what makes the singular part a
genuine Israel single layer rather than a dipole layer, for which no
surface stress tensor exists (Corollary~\ref{cor:Israel-match} below).
\end{remark}}

\begin{remark}
The same conclusion holds for the covariant components. For instance
$G_{tt}[g_{u_n}]=-F_{u_n}G^t{}_t[g_{u_n}]$, and $F_{u_n}\to F_*$ uniformly
with $F_*$ continuous by Corollary~\ref{cor:metric}; multiplication by a uniformly
convergent sequence of continuous functions is compatible with weak-$*$
convergence of the associated measures, so $G_{tt}[g_{u_n}]\to G_{tt}[g_*]$,
including at the two atoms. The statement is thus not an artifact of the
index position.
\end{remark}
}

The bulk tensor on $(a,R_2)$ is
\[
G^\mu{}_\nu=-2\beta\delta^\mu{}_\nu,
\]
corresponding to a positive constant density
\[
\rho=\frac{\beta}{4\pi}
\]
and isotropic pressure $p=-\rho$.  The exterior regions are vacuum Schwarzschild regions of masses $M_1$ and $M_2$. {
\begin{remark}[Mass consistency]\label{rem:mass-consistency}
The bulk accounts for the entire mass increase. Indeed, with the coordinate
element $r^2\,dr$, which is the one defining the Schwarzschild mass function
$m_*$ rather than the proper volume element of Section~2,
\[
  \int_a^{R_2} 4\pi r^2 \rho\, dr
  = \beta \int_a^{R_2} r^2\,dr
  = \beta A(a)
  = \Delta M H_* A(a)
  = \Delta M ,
\]
since $H_* = 1/A(a)$. By Theorem~\ref{thm:israel} below both layers have zero surface
energy density, so neither contributes. Thus all of $\Delta M$ resides in the
constant-density annulus, whereas by Theorem~\ref{thm:concentration} below all of the
negative tangential NEC resides on the inner layer at $r=a$: the selected
geometry separates where the mass sits from where the energy-condition cost is
paid.
\end{remark}

\begin{remark}[Relation to gravastar models]\label{rem:gravastar}
The selected geometry shares two defining features with gravastar models
\cite{MazurMottola2004,VisserWiltshire2004}: a constant-density bulk with $p=-\rho$, and
thin timelike layers matching it to vacuum Schwarzschild exteriors. It differs in
three respects. First, the bulk is not a de Sitter core but a Kottler region: on
$(a,R_2)$ one has $m_*(r) = M_1 + \tfrac{\beta}{3}(r^3-a^3)$, so
\[
  F_*(r) = 1 - \frac{2\bigl(M_1 - \beta a^3/3\bigr)}{r} - \frac{2\beta}{3}\,r^2 ,
\]
a Schwarzschild--de Sitter metric with $\Lambda_{\rm eff} = 2\beta$ and mass parameter
$M_1 - \beta a^3/3$, consistent with $\rho = \beta/(4\pi) = \Lambda_{\rm eff}/(8\pi)$.
Second, the region $r<a$ is vacuum Schwarzschild of mass $M_1>0$ rather than a
regular center, so the construction is a shell interpolation around a pre-existing
mass and not a horizonless compact object of the kind gravastars are proposed to
be. Third, both layers carry zero surface density
(Theorem~\ref{thm:israel}), whereas gravastar shells generically do not. We use
``gravastar-type'' below only for the bulk--two-layer structure.
\end{remark}}

\section{Israel surface layers and the sharp cost}\label{sec:israel}

{Israel's junction formula and its equivalence with the distributional description are established results \cite{Israel1966,Poisson2004,MansouriKhorrami1996}.  We repeat the short calculation to fix signs and normalization. The new conclusion is the variational one: the selected inner surface pressure is exactly the sharp limiting NEC cost and is reached by every smooth minimizing sequence.
}

 We use the normal pointing toward increasing $r$ and the convention
\[
K_{AB}=e_A{}^\mu e_B{}^\nu\nabla_\mu n_\nu.
\]
With this convention, Israel's formula, or the second junction condition, is
\[
S^A{}_B
=-\frac1{8\pi}
\left(
[K^A{}_B]-\delta^A{}_B[K]
\right),
\]
where $[Q]=Q_+-Q_-$.  

\begin{lemma}[Static spherical junction]\label{lem:static-shell}
Let $F$ be continuous and positive at $r=c$ and piecewise $C^1$ across $c$.  On the timelike surface $r=c$, with proper time $d\tau=\sqrt{F(c)}dt$, one has
\[
K^A{}_B
=
\diag
\left(
\frac{F'}{2\sqrt F},
\frac{\sqrt F}{r},
\frac{\sqrt F}{r}
\right).
\]
Consequently, if only $F'$ jumps,
\[
S^A{}_B=\diag(0,p_c,p_c),
\qquad
p_c=\frac{[F']_c}{16\pi\sqrt{F(c)}}.
\]
\end{lemma}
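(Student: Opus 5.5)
The plan is a direct computation in an adapted orthonormal frame on each side of $r=c$, followed by Israel's formula. I would first fix the geometry of the hypersurface $\Sigma=\{r=c\}$. Its unit normal, pointing toward increasing $r$, is $n_\mu=F^{-1/2}\,\delta^r_\mu$, so $n^\mu=\sqrt F\,\delta^\mu_r$. The tangent frame consists of $e_\tau=F^{-1/2}\partial_t$ (unit timelike, because $d\tau=\sqrt{F(c)}\,dt$ and $F$ is continuous), together with $\partial_\theta$ and $\partial_\phi$. The induced metric $-d\tau^2+c^2d\Omega^2$ involves only $F(c)$, so it agrees on both sides and the first junction condition holds automatically. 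I would then use $K_{AB}=e_A{}^\mu e_B{}^\nu\nabla_\mu n_\nu=-n_\lambda\Gamma^\lambda_{\mu\nu}e_A{}^\mu e_B{}^\nu$, which is valid because $n_\nu$ has only an $r$-component and $e_A$ has no $r$-component, so the term $\partial_\mu n_\nu$ drops out after contraction.

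The key step is computing the needed Christoffel symbols for $g=-F\,dt^2+F^{-1}dr^2+r^2d\Omega^2$: one has $\Gamma^r_{tt}=\tfrac12FF'$, $\Gamma^r_{\theta\theta}=-rF$, and $\Gamma^r_{\phi\phi}=-rF\sin^2\theta$. These give
\[
K_{\tau\tau}=-F^{-1/2}\,F^{-1}\,\tfrac12FF'=-\frac{F'}{2\sqrt F},
\qquad
K_{\theta\theta}=F^{-1/2}\,rF=r\sqrt F,
\]
and the $\phi$ component is analogous. Raising indices with the induced metric, where $h^{\tau\tau}=-1$ and $h^{\theta\theta}=r^{-2}$, yields $K^\tau{}_\tau=F'/(2\sqrt F)$ and $K^\theta{}_\theta=K^\phi{}_\phi=\sqrt F/r$. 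These are evaluated with one-sided limits of $F'$ on each side, which exist because $F$ is piecewise $C^1$. This establishes the first display. I expect the main obstacle to be sign bookkeeping, namely the contraction $-n_r\Gamma^r$ and the sign flip on raising the $\tau$ index, so I would check the result against the flat case $F\equiv1$, where $K^\theta{}_\theta=1/r$ is correct for an outward normal.

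For the jumps, continuity of $F$ gives $[K^\theta{}_\theta]=[K^\phi{}_\phi]=0$ and $[K^\tau{}_\tau]=[F']_c/(2\sqrt{F(c)})$, so $[K]=[F']_c/(2\sqrt{F(c)})$. Israel's formula then gives
\[
S^\tau{}_\tau=-\frac1{8\pi}\bigl([K^\tau{}_\tau]-[K]\bigr)=0,
\qquad
S^\theta{}_\theta=-\frac1{8\pi}\bigl(0-[K]\bigr)=\frac{[F']_c}{16\pi\sqrt{F(c)}},
\]
and the same value for $S^\phi{}_\phi$. This shows that the surface density vanishes and identifies $p_c$ as claimed.
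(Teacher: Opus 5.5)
Your proposal is correct and follows essentially the same route as the paper: you take the unit normal $n=\sqrt F\,\partial_r$, use $\Gamma^r{}_{tt}$, $\Gamma^r{}_{\theta\theta}$, $\Gamma^r{}_{\phi\phi}$ to get $K^A{}_B$, use continuity of $F$ to kill the angular jumps so that $[K]=[K^\tau{}_\tau]$, and substitute into Israel's formula. Your explicit sign bookkeeping fills in details the paper leaves implicit: the vanishing of the $\partial_\mu n_\nu$ contribution, raising the $\tau$ index with $h^{\tau\tau}=-1$, and the automatic first junction condition.
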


\begin{proof}
The unit normal is
\[
n=\sqrt F\,\partial_r.
\]
Using the shell tangent vectors and the Christoffel symbols
\[
\Gamma^r{}_{tt}=\frac12FF',
\qquad
\Gamma^r{}_{\theta\theta}=-rF,
\qquad
\Gamma^r{}_{\phi\phi}=-rF\sin^2\theta,
\]
one obtains the displayed components of $K^A{}_B$.  Continuity of $F$ gives zero jump in the two angular components, so
\[
[K]=[K^\tau{}_\tau]
=\frac{[F']_c}{2\sqrt{F(c)}}.
\]
Substitution into Israel's formula gives the result.
\end{proof}

Since
\[
F'= -\frac{2m'}r+\frac{2m}{r^2},
\]
continuity of $m$ gives
\[
[F']_c=-\frac{2}{c}[m']_c.
\]
At the inner surface,
\[
[m']_a=\beta a^2,
\qquad
[F']_a=-2\beta a.
\]
At the outer surface,
\[
[m']_{R_2}=-\beta R_2^2,
\qquad
[F']_{R_2}=2\beta R_2.
\]

\begin{theorem}[Israel surface tensors]\label{thm:israel}
The two surface tensors are
\[
S^A{}_B\big|_{r=a}
=
\diag(0,p_a,p_a),
\qquad
p_a
=-\frac{\beta a}{8\pi\sqrt{1-2M_1/a}},
\]
and
\[
S^A{}_B\big|_{r=R_2}
=
\diag(0,p_{R_2},p_{R_2}),
\qquad
p_{R_2}
=\frac{\beta R_2}{8\pi\sqrt{1-2M_2/R_2}}.
\]
Both layers have zero surface energy density.  The inner layer has negative tangential pressure, while the outer layer has positive tangential pressure.
\end{theorem}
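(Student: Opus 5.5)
The plan is to apply Lemma~\ref{lem:static-shell} separately at $c=a$ and at $c=R_2$ to the selected metric $g_*$. The hypotheses hold there: $F_*$ is continuous and positive by Corollary~\ref{cor:metric}, and it is piecewise $C^1$ with breakpoints only at $a$ and $R_2$, since $m_*$ is continuous and piecewise $C^1$. The lemma then gives $S^A{}_B=\diag(0,p_c,p_c)$ with $p_c=[F']_c/(16\pi\sqrt{F(c)})$. The zero surface energy density therefore follows at once from continuity of $F_*$: the angular extrinsic curvature components $\sqrt F/r$ do not jump, and the $\tau\tau$ component of $[K^A{}_B]-\delta^A{}_B[K]$ cancels.

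The remaining work is to evaluate the jumps and the values of $F_*$ at the two shells. First I will justify $[F']_c=-(2/c)[m']_c$. Since $F'=-2m'/r+2m/r^2$ and $m_*$ is continuous, only the $m'$ term can jump. Next I read off the jumps of $m_*'$ from $m_*'=\beta r^2\one_{(a,R_2)}$: at $a$ the derivative rises from $0$ to $\beta a^2$, and at $R_2$ it falls from $\beta R_2^2$ to $0$. This gives $[F']_a=-2\beta a$ and $[F']_{R_2}=2\beta R_2$.

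For the denominators I use $P_*(a)=0$ and $P_*(R_2)=1$. Hence $m_*(a)=M_1$ and $m_*(R_2)=M_2$, so that $F_*(a)=1-2M_1/a$ and $F_*(R_2)=1-2M_2/R_2$. Both values are positive because $a\geq R_1>2M_2$. Substituting into the lemma gives
\[
p_a=\frac{-2\beta a}{16\pi\sqrt{1-2M_1/a}}=-\frac{\beta a}{8\pi\sqrt{1-2M_1/a}},
\qquad
p_{R_2}=\frac{\beta R_2}{8\pi\sqrt{1-2M_2/R_2}}.
\]
The signs follow from $\beta=\Delta M H_*>0$.

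The only delicate point is the boundary case $a=R_1$, where the shell sits at the edge of the annulus. Here I note that $m_*$ is extended by $M_1$ for $r<R_1$, so the one-sided derivative from the interior side is still zero. The jump computation is therefore unchanged. Apart from that, the main obstacle is keeping the orientation and sign conventions consistent: the normal points toward increasing $r$ and $[Q]=Q_+-Q_-$, which are exactly the conventions already fixed in Lemma~\ref{lem:static-shell}.
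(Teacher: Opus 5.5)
Your proposal is correct and follows essentially the same argument as the paper: apply Lemma~\ref{lem:static-shell} at $c=a$ and $c=R_2$, use continuity of $m_*$ to get $[F']_c=-(2/c)[m']_c$, read off $[m']_a=\beta a^2$ and $[m']_{R_2}=-\beta R_2^2$ from $m_*'=\beta r^2\one_{(a,R_2)}$, and evaluate $F_*(a)=1-2M_1/a$ and $F_*(R_2)=1-2M_2/R_2$ from $P_*(a)=0$ and $P_*(R_2)=1$. Your check of the boundary case $a=R_1$ is an extra step the paper does not spell out, and it agrees with the paper's extension of $m_*$ by $M_1$ to the left of $R_1$.
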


To compare invariant and coordinate distributions, let $\delta_{\Sigma_c}$ be characterized by
\[
\int \varphi\,\delta_{\Sigma_c}\,dV_g
=
\int_{\Sigma_c}\varphi\,dV_{\Sigma_c}.
\]
Since
\[
dV_g=r^2\sin\theta\,dt\,dr\,d\theta\,d\phi
\]
and
\[
dV_{\Sigma_c}=c^2\sqrt{F(c)}\sin\theta\,dt\,d\theta\,d\phi,
\]
one has
\[
\delta_{\Sigma_c}=\sqrt{F(c)}\,\delta(r-c).
\]
Therefore
\[
8\pi p_c\delta_{\Sigma_c}
=
\frac12[F']_c\delta(r-c).
\]

\begin{corollary}[Exact agreement of the singular tensors]\label{cor:Israel-match}
The singular part of the distributional Einstein tensor is
\[
G[g_*]_{\mathrm{sing}}
=
8\pi
\left(
S_a\delta_{\Sigma_a}
+
S_{R_2}\delta_{\Sigma_{R_2}}
\right).
\]
Thus the direct distributional limit and the Israel junction calculation agree exactly.
\end{corollary}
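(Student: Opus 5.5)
The plan is to compare the two sides component by component as coordinate distributions in the fixed chart of Theorem~\ref{thm:Einstein-convergence}. The singular part of $G[g_*]$ is read off directly from that theorem. The $t$ and $r$ components are the bounded functions $-2\beta\one_{(a,R_2)}$, so they carry no singular part. The angular components have singular part
\[
-\beta a\,\delta(r-a)+\beta R_2\,\delta(r-R_2).
\]
On the Israel side, Theorem~\ref{thm:israel} (via Lemma~\ref{lem:static-shell}) gives $S^A{}_B=\diag(0,p_c,p_c)$ at each layer. Embedded in the spacetime, this contributes nothing to the $tt$ and $rr$ mixed components: the surface density is zero and the normal component is absent. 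It contributes only to the angular components. The identity therefore reduces to two scalar identities, one at each radius $c\in\{a,R_2\}$:
\[
8\pi p_c\,\delta_{\Sigma_c}=(\text{coefficient of }\delta(r-c)\text{ in }G^\theta{}_\theta[g_*])\,\delta(r-c).
\]

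To verify these, I will use the conversion already recorded before the statement, $\delta_{\Sigma_c}=\sqrt{F(c)}\,\delta(r-c)$. This gives $8\pi p_c\delta_{\Sigma_c}=\tfrac12[F']_c\,\delta(r-c)$. Continuity of $m_*$ gives $[F']_c=-\tfrac{2}{c}[m']_c$. Inserting $[m']_a=\beta a^2$ and $[m']_{R_2}=-\beta R_2^2$ yields
\[
\tfrac12[F']_a=-\beta a,\qquad \tfrac12[F']_{R_2}=\beta R_2.
\]
These match the Dirac coefficients of Theorem~\ref{thm:Einstein-convergence} exactly, and the $\phi$ component is identical by symmetry. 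As a cross-check I will also use the explicit pressures: $8\pi p_a\sqrt{F_*(a)}=-\beta a$, because $F_*(a)=1-2M_1/a$. Similarly $8\pi p_{R_2}\sqrt{F_*(R_2)}=\beta R_2$, because $m_*(R_2)=M_2$.

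The main obstacle is normalization rather than computation. The distributional Einstein tensor was paired against coordinate $dr$ in the fixed chart, whereas $\delta_{\Sigma_c}$ is defined through the invariant volume $dV_g$. I must make sure that the factor $\sqrt{-\det g}=r^2\sin\theta$ is consistently absorbed on both sides. The point is that it is the same factor in $dV_g$ and in the reference measure, up to the $r^2\sin\theta$ that also appears in $dV_{\Sigma_c}$. This leaves only the factor $\sqrt{F(c)}$ from the proper time $d\tau=\sqrt{F(c)}\,dt$. I will redo this pairing explicitly with a test function $\varphi(t,r,\theta,\phi)$ to confirm the identity $\delta_{\Sigma_c}=\sqrt{F(c)}\delta(r-c)$ as densities relative to the chosen reference measure. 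Finally I will note that the mixed-index surface tensor $S^A{}_B$ pushes forward to spacetime mixed components without metric factors, so no further rescaling occurs.
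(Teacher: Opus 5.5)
Your proposal is correct and matches the paper's argument: the paper also converts the invariant definition into $\delta_{\Sigma_c}=\sqrt{F(c)}\,\delta(r-c)$ and obtains $8\pi p_c\,\delta_{\Sigma_c}=\tfrac12[F']_c\,\delta(r-c)$. It then matches this, via $[F']_c=-\tfrac{2}{c}[m']_c$, with the Dirac coefficients $-\beta a$ and $\beta R_2$ of Theorem~\ref{thm:Einstein-convergence}, while the $t$ and $r$ components are regular on both sides. Your normalization concern resolves as you expect, since both sides are generalized functions paired against the same smooth reference density, and your explicit cross-check with $p_a$ and $p_{R_2}$ is consistent.
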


This agreement is expected from the general theory \cite{MansouriKhorrami1996,LeFlochMardare2007}; its role here is to identify the singular geometry selected by the variational problem.

\begin{definition}[Negative tangential NEC measure]
{For a smooth admissible profile, define the positive Radon measure $\Xi_u$ by}
\[
d\Xi_u(r)
:=
[-(\rho+p_t)]_+\,dV
=
\frac{\Delta M}{2}
\frac{r^3(u')_+}{\sqrt{F_u(r)}}\,dr.
\]
Its total mass is
\[
\Xi_u([R_1,R_2])
=
\frac{\Delta M}{2}\I[u].
\]
\end{definition}

\begin{theorem}[Concentration and sharp surface cost]\label{thm:concentration}
For every smooth minimizing sequence,
\[
\Xi_{u_n}
\weakstar
\frac{\Delta M}{2}\lambda_*\delta_a.
\]
Moreover,
\[
\frac{\Delta M}{2}\lambda_*
=
4\pi a^2(-p_a).
\]
Hence the sharp smooth variational cost equals the integrated negative Israel surface pressure of the selected inner layer.
\end{theorem}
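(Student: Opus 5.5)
The theorem has two parts: a weak-* concentration statement for the NEC measures, and an algebraic identity for the surface cost. I will handle them in that order.

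\emph{Concentration.} By definition, $\Xi_{u_n}=\frac{\Delta M}{2}W_{u_n}\mu_n$ with $\mu_n=(u_n')_+\,dr$. Theorem~\ref{thm:minseq} gives $\mu_n\weakstar H_*\delta_a$, with masses bounded by $\I[u_n]/R_1^3$. Lemma~\ref{lem:weight-cont}, applied with $u_n\to u_*$ in $L^1$, gives $W_{u_n}\to W_{u_*}$ uniformly on $[R_1,R_2]$. For a test function $\varphi\in C_c(\R)$ I split
\[
\int\varphi W_{u_n}\,d\mu_n-\int\varphi W_{u_*}\,d\mu_n .
\]
This difference is bounded by $\|\varphi\|_\infty\|W_{u_n}-W_{u_*}\|_\infty\,\mu_n(\R)\to0$. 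Since $\varphi W_{u_*}$ is continuous on $[R_1,R_2]$, which is where all $\mu_n$ live, the remaining integral converges to $H_*\varphi(a)W_{u_*}(a)$.

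Next I evaluate the limiting weight. Because $P_*(a)=0$, we have $F_*(a)=1-2M_1/a$, so $W_{u_*}(a)=q(a)$. Then
\[
H_*W_{u_*}(a)=q(a)/A(a)=\Psi(a)=\lambda_*,
\]
which yields $\Xi_{u_n}\weakstar\frac{\Delta M}{2}\lambda_*\delta_a$.

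There is one technical point: the test functions must be allowed to be nonzero at $a=R_1$ and at $R_2$. This is harmless because all measures are supported in the compact interval $[R_1,R_2]$, and $W_{u_*}$ extends continuously there. One could alternatively get the mass convergence from $\Xi_{u_n}(\R)=\frac{\Delta M}{2}\I[u_n]\to\frac{\Delta M}{2}\lambda_*$ combined with tightness, but the direct computation above already suffices.

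\emph{Surface cost.} Using Theorem~\ref{thm:israel} and $\beta=\Delta M H_*=\Delta M/A(a)$,
\[
4\pi a^2(-p_a)=\frac{a^3\beta}{2\sqrt{1-2M_1/a}}=\frac{\Delta M}{2}\frac{q(a)}{A(a)}=\frac{\Delta M}{2}\lambda_* .
\]

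The main obstacle I expect is the first step, where the weight depends on the sequence and the limit measure sits at the boundary point $a=R_1$. Uniform convergence of the weights together with the uniform mass bound resolves it, exactly as in the proof of Lemma~\ref{lem:lsc}.
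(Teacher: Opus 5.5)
Your proposal is correct and follows essentially the same argument as the paper. You separate off the limiting weight, bound the error by $\|W_{u_n}-W_{u_*}\|_{L^\infty}$ times the bounded masses of $\mu_n$, pass to the limit using $\mu_n\to H_*\delta_a$ weak-*, and identify the coefficient through $P_*(a)=0$ and $H_*=1/A(a)$; the surface-cost identity is the same direct substitution of $\beta=\Delta M H_*$ into the Israel pressure $p_a$.
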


\begin{proof}
{Let $\varphi\in C([R_1,R_2])$.  By definition,}
\[
\int\varphi\,d\Xi_{u_n}
=
\frac{\Delta M}{2}
\int
\varphi(r)
\frac{r^3}{\sqrt{F_{u_n}(r)}}
\,d\mu_n(r),
\qquad
\mu_n:=(u_n')_+\,dr.
\]
{Theorem~\ref{thm:minseq} gives $\mu_n\weakstar H_*\delta_a$.  Corollary~\ref{cor:metric} gives uniform convergence}
\[
\frac{r^3}{\sqrt{F_{u_n}(r)}}
\longrightarrow
\frac{r^3}{\sqrt{F_*(r)}}.
\]
{To combine these facts, add and subtract the limiting weight.  The error is bounded by the uniform difference of the weights times $\mu_n(\R)$, and these masses converge to $H_*$.  Therefore the error tends to zero.  For the remaining term, weak-* convergence gives}
\[
\int
\varphi(r)
\frac{r^3}{\sqrt{F_*(r)}}
\,d\mu_n(r)
\longrightarrow
H_*\varphi(a)
\frac{a^3}{\sqrt{F_*(a)}}.
\]
{This proves}
\[
\Xi_{u_n}
\weakstar
\frac{\Delta M}{2}
H_*\frac{a^3}{\sqrt{F_*(a)}}\delta_a.
\]

{Now $P_*(a)=0$, because the box profile vanishes to the left of $a$.  Hence}
\[
F_*(a)=1-\frac{2M_1}{a}.
\]
{Using $H_*=1/A(a)$ and the definition of $\Psi$, the coefficient becomes}
\[
\frac{\Delta M}{2}
H_*\frac{a^3}{\sqrt{1-2M_1/a}}
=
\frac{\Delta M}{2}
\frac{a^3}{A(a)\sqrt{1-2M_1/a}}
=
\frac{\Delta M}{2}\Psi(a)
=
\frac{\Delta M}{2}\lambda_*.
\]
{Thus the smooth negative-NEC measures converge to a single Dirac mass whose total mass is exactly the sharp variational value in physical units.}

{Finally, Theorem~\ref{thm:israel} gives}
\[
-p_a
=
\frac{\beta a}{8\pi\sqrt{1-2M_1/a}},
\qquad
\beta=\Delta M H_*.
\]
{Multiplying by the area $4\pi a^2$ of the inner sphere gives}
\[
4\pi a^2(-p_a)
=
\frac{\Delta M}{2}
H_*\frac{a^3}{\sqrt{1-2M_1/a}}
=
\frac{\Delta M}{2}\lambda_*.
\]
{The equality therefore compares two independently defined quantities: the left side is the invariant integrated surface pressure obtained from the Israel tensor, whereas the right side is the infimum of the smooth proper-volume NEC functional. Both sides have the same mass/energy dimension in geometric units and represent the same limiting cost.}
\end{proof}

{\begin{remark}[Energy conditions in the selected geometry]\label{rem:energy-conditions}
For a surface stress tensor $S^A{}_B = \operatorname{diag}(-\sigma,p,p)$ the shell
null, weak, and dominant conditions read $\sigma + p \ge 0$; $\sigma \ge 0$ and
$\sigma + p \ge 0$; and $\sigma \ge 0$ and $|p| \le \sigma$. The selected geometry
stands as follows.

In the bulk $(a,R_2)$ one has $\rho = \beta/(4\pi) > 0$ and $p_r = p_t = -\rho$, so
the null, weak, and dominant conditions hold, the first two marginally and the
third with equality; the strong condition fails, as it does for any $p=-\rho$
region. The exterior regions are vacuum.

On the layers, Theorem~\ref{thm:israel} gives $\sigma = 0$ at both $r=a$ and
$r=R_2$. The inner layer has $p_a < 0$ and therefore violates the shell null and
weak conditions, while the outer layer has $p_{R_2} > 0$ and satisfies both. The
dominant condition, however, requires $|p| \le \sigma = 0$ and so fails at
\emph{both} layers. Thus the concentration statement of Theorem~\ref{thm:concentration}
is specific to the tangential null condition: the negative tangential NEC localizes
entirely on the inner layer, whereas DEC violation does not localize.

The vanishing of $\sigma$ is a property of the ansatz rather than of the optimizer.
Since $m_u$ is an integral of an $L^1$ function, $F_u$ is continuous for every
admissible profile, so $[K^\theta{}_\theta] = [K^\phi{}_\phi] = 0$ and hence
$S^\tau{}_\tau = 0$ at any junction radius. No configuration in this class can carry
a massive layer, and the DEC status of the limit is therefore inherited from the
gauge class, not selected by minimization.
\end{remark}}

\section{Scope, further consequences, and conclusion}

Several consequences follow directly from the explicit optimizer. Since
\[
  x_0<\frac{7M_1}{3},
\]
the interior phase can coexist with the admissibility condition $R_1>2M_2$
only if
\[
  2M_2<\frac{7M_1}{3},
\]
or equivalently
\[
  \frac{\Delta M}{M_1}<\frac16 .
\]
Thus the interior phase, and consequently the square-root stability regime of
Theorem~\ref{thm:stability}, occurs only in a restricted part of parameter space.

For fixed $M_1$, $M_2$ and $R_1$, as $R_2\to\infty$ one has
\[
  x_0\longrightarrow\frac{7M_1}{3},
  \qquad
  H_*=O(R_2^{-3}),
  \qquad
  \lambda_*=O(R_2^{-3}).
\]
Hence a sufficiently broad transition region can make the sharp integrated
defect small, in contrast with the reciprocal-width divergence of
Theorem~\ref{thm:thin}. For fixed $M_1$, $R_1$ and $R_2$, the normalized value
$\lambda_*$ is independent of $\Delta M$, whereas the physical cost
\[
  \frac{\Delta M}{2}\,\lambda_*
\]
vanishes linearly as $\Delta M\downarrow0$.

{The exact solvability of the problem rests on a simple structural feature of
the optimizer. All positive variation is concentrated at a radius $a$ where no
additional mass has yet accumulated,
\[
  P_*(a)=0,
\]
and consequently
\[
  W_{u_*}(a)=q(a),
\]
so the nonlocal part of the weight is inactive at the minimizing jump. The
sharp problem therefore reduces to the one-dimensional minimization of
\[
  \Psi(r)=\frac{q(r)}{A(r)} .
\]
This also explains why $\lambda_*$ depends on $M_1$, $R_1$ and $R_2$, but not
directly on $M_2$. The bulk region carries the entire mass increment,
\[
  \int_a^{R_2}4\pi r^2\rho\,dr=\beta A(a)=\Delta M,
\]
while both surface layers have zero surface density.

The model is deliberately restricted. The ansatz fixes the areal radius and
imposes
\[
  g_{tt}g_{rr}=-1,
\]
which forces
\[
  p_r=-\rho
\]
and therefore saturates the radial NEC. The obstruction in Proposition~\ref{thm:nogo} is
correspondingly elementary; the main content of the paper lies in the sharp
lower bound, the rigidity of near-minimizers, and the selected bulk--layer
geometry.

The functional integrates the pointwise negative part
\[
  \bigl[-(\rho+p_t)\bigr]_+ ,
\]
so regions of NEC violation cannot be canceled by regions where
$\rho+p_t>0$. It does not penalize large positive pressures, violations of the
dominant energy condition, or incompatibility with a prescribed matter
equation of state. In particular, both selected surface layers have $\sigma=0$
and $p\neq0$, and therefore violate the dominant energy condition; the inner
layer also violates the shell NEC. These restrictions make the variational
problem exactly solvable but limit its physical interpretation.

A natural extension is to introduce an independent lapse,
\[
  g=-e^{2\Phi(r)}F(r)\,dt^2+F(r)^{-1}\,dr^2+r^2\,d\Omega^2,
\]
and optimize jointly over $m$ and $\Phi$. One may then ask whether a positive
lower bound persists when both radial and tangential null directions are
included.

A second direction is the joint thin-layer and near-horizon regime
\[
  R_1\downarrow 2M_2,
  \qquad
  R_2-R_1\downarrow 0 .
\]
In this limit the uniform estimate
\[
  F_u(r)\ge\kappa,
  \qquad
  \kappa=1-\frac{2M_2}{R_1},
\]
degenerates, and so do the constants in the uniform continuity and stability
estimates. The selected weight at the inner layer, however, satisfies
\[
  F_*(a)=1-\frac{2M_1}{a},
\]
which remains positive as $a\downarrow 2M_2>2M_1$. Thus the near-horizon limit
introduces no additional singular factor at the optimizer, although the
reciprocal-width divergence of Theorem~\ref{thm:thin} remains.

It would also be useful to obtain stability estimates that interpolate
uniformly between the square-root and linear regimes as $R_1$ approaches the
transition radius $x_0$. These extensions would test whether the variational
selection mechanism persists beyond the one-function Schwarzschild-gauge
setting.}

{
\section*{Statements and declarations}

\paragraph{Funding.} The authors declare that no funds, grants, or other support
were received during the preparation of this manuscript.

\paragraph{Competing interests.} The authors have no competing interests to
declare that are relevant to the content of this article.

\paragraph{Data availability.} No datasets were generated or analyzed during the
current study.

\paragraph{Use of AI tools.} During the preparation of this work the authors used
generative AI assistants (ChatGPT-5.6 Sol, OpenAI; Claude Opus 4.8, Anthropic) for
language editing and \LaTeX{} preparation, and as interactive assistants for
exploratory discussion and for cross-checking intermediate calculations. All
definitions, theorem statements, and proofs are the authors' own. The authors have
independently verified every mathematical argument and every cited reference,
reviewed and edited all content, take full responsibility for it, and confirm that
no AI system is an author or meets authorship criteria.}

\bibliographystyle{unsrtnat}
\bibliography{Sharp_Tangential_NEC_Minimization_and_Israel_Surface_Layers_references}

\end{document}